\documentclass[11pt,twoside]{article}
\usepackage{amsmath, amssymb, amsthm, amsfonts,cite,alltt,clrscode}
\usepackage[english]{babel}
\usepackage{graphicx}
\usepackage{caption}
\usepackage{url}
\usepackage{float}
\usepackage{wrapfig}
\usepackage{array}
\usepackage{txfonts}
\urlstyle{same}
\usepackage{hyperref}
\usepackage[list=true,listformat=simple]{subcaption}

\usepackage{cmap}
\usepackage[T1]{fontenc}

\newtheorem{theorem}{Theorem}[section]

\graphicspath{{./figuresPushPull/}}

{\makeatletter
 \gdef\xxxmark{%
   \expandafter\ifx\csname @mpargs\endcsname\relax 
     \expandafter\ifx\csname @captype\endcsname\relax 
       \marginpar{xxx}
     \else
       xxx 
     \fi
   \else
     xxx 
   \fi}
 \gdef\xxx{\@ifnextchar[\xxx@lab\xxx@nolab}
 \long\gdef\xxx@lab[#1]#2{\textbf{[\xxxmark #2 ---{\sc #1}]}}
 \long\gdef\xxx@nolab#1{\textbf{[\xxxmark #1]}}
}

\let\realbibitem=\bibitem
\def\bibitem{\par \vspace{-1.2ex}\realbibitem}

\usepackage{times}

%
%

\newlength\aboveparagraphskip
\aboveparagraphskip=3.25ex plus 1ex minus .2ex
\newlength\belowparagraphskip
\belowparagraphskip=-1em
\makeatletter
\def\paragraph{\@startsection{paragraph}{4}{\z@}{-\aboveparagraphskip}%
                 {\belowparagraphskip}{\normalfont\normalsize\bfseries}}
\makeatother
\aboveparagraphskip=.5ex plus .5ex minus .25ex

\def\compactify{\itemsep=0pt \topsep=0pt \partopsep=0pt \parsep=0pt}
\let\latexusecounter=\usecounter
\newenvironment{itemize*}
  {\begin{itemize}\compactify}
  {\end{itemize}}
\newenvironment{enumerate*}
  {\def\usecounter{\compactify\latexusecounter}
   \begin{enumerate}}
  {\end{enumerate}\let\usecounter=\latexusecounter}
\newenvironment{description*}
  {\begin{description}\compactify}
  {\end{description}}


\let\epsilon=\varepsilon



\usepackage[margin=1in]{geometry}

\begin{document}

\title{Push-Pull Block Puzzles are Hard}
\author{Erik D. Demaine\thanks{MIT Computer Science and Artificial Intelligence Laboratory, 32 Vassar Street, Cambridge, MA 02139, USA, \url{{edemaine,isaacg,jaysonl}@mit.edu}} \and Isaac Grosof\footnotemark[1]  \and Jayson Lynch\footnotemark[1]}
\date{}
\maketitle

\begin{abstract}
This paper proves that push-pull block puzzles in 3D are PSPACE-complete to solve, and push-pull block puzzles in 2D with thin walls are NP-hard to solve, settling an open question \cite{zubaranagent}. Push-pull block puzzles are a type of recreational motion planning problem, similar to Sokoban, that involve moving a `robot' on a square grid with $1 \times 1$ obstacles. The obstacles cannot be traversed by the robot, but some can be pushed and pulled by the robot into adjacent squares. Thin walls prevent movement between two adjacent squares. This work follows in a long line of algorithms and complexity work on similar problems \cite{PushPull91,Push100,Push*00,PushPushk04,non-crossing01,DO92,Push2F02,Sokoban98,DZ96,Pull10}. The 2D push-pull block puzzle shows up in the video games \emph{Pukoban} as well as \emph{The Legend of Zelda: A Link to the Past}, giving another proof of hardness for the latter \cite{NintendoFun2014}. This variant of block-pushing puzzles is of particular interest because of its connections to reversibility, since any action (e.g., push or pull) can be inverted by another valid action (e.g., pull or push).
\end{abstract}

%
%
%

%
%
\section{Introduction} 
Block-pushing puzzles are a common puzzle type with one of the best known example being \emph{Sokoban}. Puzzles with the ability to push and pull blocks have found their way into several popular video games including \emph{The Legend of Zelda} series, \emph{Starfox Adventures}, \emph{Half-Life} and \emph{Tomb Raider}. Block-pushing puzzles are also an abstraction of motion planning problems with movable obstacles. In addition to these games, one could imagine real-world scenarios, like that of a forklift in a warehouse, bearing similarity. Since motion planning is such an important and computationally difficult problem, it can be useful to look at simplified models to try to get a better understanding of the larger problem.

A significant amount of research has gone into characterizing the complexity of block sliding puzzles. This includes PSPACE-completeness for well-known puzzles like sliding-block puzzles\cite{hearn2005pspace}, Sokoban \cite{Sokoban98, DZ96}, the 15-puzzle \cite{15Puzzle}, 2048\cite{abdelkader2048}, Candy Crush\cite{guala2014bejeweled} and Rush Hour \cite{RushHour02}. Block pushing puzzles are a type of block sliding puzzle in which the blocks are moved by a small robot within the puzzles. This type of block sliding puzzle has gathered a significant amount of study. Table~\ref{BlocksTable} gives a summary of results on block pushing puzzles. Variations include Sokoban\cite{Sokoban98, DZ96}, where blocks must reach specific targets (the \emph{Path?} column), versions where multiple blocks can be pushed\cite{Push100, Push*00, Push2F02, DZ96, Sokoban98, Pull10} (the \emph{Push} column), versions where blocks continue to slide after being pushed\cite{PushPushk04, Push*00} (the \emph{Sliding} column), versions where fixed blocks are allowed\cite{DO92, Push2F02} (the \emph{Fixed?} column), and versions where the robot can pull blocks\cite{Pull10} (the \emph{Pull} column). 

We are particularly interested in the push-pull block model because any sequence of moves in the puzzle can be undone.  Having an undirected state-space graph seems like an interesting property both mathematically and from a puzzle stand-point. This sort of player move reversibility lead to some of our gadgets being logically reversible, a notion that is fundamentally linked to quantum computation and the thermodynamics of computation. 

\begin{table}
\centering
\centerline{
\begin{tabular}{|l|l|l|l|l|l|l|l|l|}
\hline
\emph{Name} & \emph{Push} & \emph{Pull} & \emph{Fixed?} & \emph{Path?} & \emph{Sliding} & \emph{Complexity} \\ \hline
\hline
Push-$k$ & $k$ & 0 & No & Path & Min & NP-hard\cite{Push100} \\ \hline
Push-$*$ & $*$ & 0 & No & Path  & Min & NP-hard\cite{Push*00} \\ \hline
PushPush-$k$ & $k$ & 0 & No & Path  & Max & PSPACE-c.\cite{PushPushk04} \\ \hline
PushPush-$*$ & $*$ & 0 & No & Path  & Max & NP-hard\cite{Push*00} \\ \hline
Push-$1$F & $1$ & $0$ & Yes & Path  & Min &  NP-hard \cite{DO92} \\ \hline
Push-$k$F & $k\geq 2$ & $0$ & Yes & Path  & Min & PSPACE-c.\cite{Push2F02} \\ \hline
Push-$*$F & $*$ & $0$ & Yes & Path  & Min & PSPACE-c.\cite{Push2F02} \\ \hline
Sokoban & $1$ & $0$ & Yes & Storage  & Min & PSPACE-c.\cite{Sokoban98} \\ \hline
Sokoban$(k,1)$ & $k\geq 5$ & $1$ & Yes & Storage  & Min & NP-hard\cite{DZ96} \\ \hline
Pull-$1$ & $0$ & $1$ & No & Storage  & Min & NP-hard\cite{Pull10} \\ \hline
Pull-$k$F & $0$ & $k$& Yes & Storage  & Min &  NP-hard\cite{Pull10} \\ \hline
PullPull-$k$F & $0$ & $k$ & Yes & Storage  & Max  & NP-hard\cite{Pull10} \\ \hline
\textbf{Push-$k$ Pull-$l$W} & $k$ & $l$ & Wall & Path  & Min & \textbf{NP-hard}\ (\S  \ref{2DNPhard}) \\ \hline
\textbf{3D Push-$k$ Pull-$l$F} & $k$ & $l$ & Yes & Path & Min &  \textbf{NP-hard}\ (\S  \ref{3DNPhard}) \\ \hline
\textbf{3D Push-$1$ Pull-$1$W} & $1$ & $1$ & Wall & Path & Min &  \textbf{PSPACE-c.}\ (\S  \ref{3DPSPACE}) \\ \hline
\textbf{3D Push-$k$ Pull-$k$F} & $k > 1$ & $k >1$ & Yes & Path & Min &  \textbf{PSPACE-c.}\ (\S  \ref{3DPSPACE}) \\ \hline
\end{tabular}
}
\caption{Summary of past and new results on block pushing and/or pulling. The \emph{Push} and \emph{Pull} columns describe how many blocks in a row can be moved by the robot. Here $k$ and $l$ are positive integers; $*$ refers to an unlimited number of blocks. The \emph{Fixed} column notes whether fixed blocks (Yes) or thin walls (Wall) are allowed. In the problem title, F means fixed blocks are included; W means thin walls are included. The \emph{Path} column describes whether the objective is to have the robot find a path to a target location, or to store the blocks in a specific configuration. The \emph{Sliding} column notes whether blocks move one square or as many squares as possible before stopping.}
\label{BlocksTable}
\end{table}

We add several new results showing that certain block pushing puzzles, which include the ability to push and pull blocks, are NP-hard or PSPACE-complete. The push-pull block puzzle is instantiated in the game Pukoban and heuristics for solving it have been studied \cite{zubaranagent}, but its computational complexity was left as an open question.

We introduce \emph{thin walls}, which prevent motion between two adjacent empty squares. We prove that all path planning problems in 2D with thin walls or in 3D, in which the robot can push $k$ blocks and pull $l$ blocks for all $k,l \in \mathbb{Z}^+$ are NP-hard. We also show that path planning problems where the robot can push and pull $k$ blocks are PSPACE-complete, with thin walls needed only for $k=1$. Our results are shown in the last four lines of Table~\ref{BlocksTable}. To prove these results, we introduce two new abstract gadgets, the set-verify and the 4-toggle, and prove hardness results for questions about their the legal state transitions. 

2D Push-$k$ Pull-$j$ is defined as follows: There is a square lattice of cells. Each cell is connected to its orthogonal neighbors. Cells may either be empty, hold a movable block, or hold a fixed block. Additionally, in settings that allow thin walls, edges between cells may be omitted. There is also a robot on a cell. The robot may move from its current cell to an unoccupied adjacent cell. The robot may also \emph{push} up to $k$ movable blocks arranged in a straight line one cell forward, as long as there is an open cell with no wall in that direction. Here the robot moves into the cell occupied by the adjacent block and each subsequent block moves into the adjacent cell in the same direction. Likewise, the robot may \emph{pull} up to $j$ movable blocks in a straight line as long as there are no walls in the way and there is an open cell behind the robot. The robot moves into that cell, the block opposite that cell moves into the one the robot originally occupied, and subsequent blocks also move once cell toward the robot. The goal of the puzzle is for the robot to reach a specified goal cell. Given such a description, is there a legal path for the robot from its starting cell to the goal cell? The 3D problem is defined analogously on a cubic lattice.

%
\section{Push-Pull Block Puzzles are NP-hard}
\label{2DNPhard}
We show NP-hardness for Push-$k$ Pull-$l$ in 2D with thin walls for all positive integers $k$, $l$ in Section~\ref{2DNPhard} and Push-$q$ Pull-$r$ in 3D for all positive integers $q, r$ in Section~\ref{3DNPhard}.

\subsection{2D Push-Pull with Thin Walls}
\label{2DNPhard}
In this section we prove that Push-$k$ Pull-$l$ in 2D with fixed blocks is NP-hard, for all positive $k$ and $l$, if we include \emph{thin walls}. 
Thin walls are a new, but natural, notion for block pushing puzzles. They prevent blocks or the robot from passing between two adjacent, empty squares, as though there were a thin wall blocking the path. We will prove hardness by a reduction from 3SAT. The 3SAT problem asks whether, given a set of variables $\{x_1, x_2, \ldots x_n\}$ and a boolean formula in conjunctive normal form with exactly three variables per clause, there exists an assignment of values to those variables that satisfies the formula\cite{NPBook}. To do so we will introduce an abstract gadget called the Set-Verify gadget. This gadget will then be used to construct crossover gadgets (in Appendix~\ref{sec:NPCrossover}), and variable and clause gadgets.

\subsubsection{Set-Verify Gadgets}
\label{sec:SetVerifyGadgets}
The Set-Verify gadget is an abstract gadget for motion planning problems. The gadget has four entrances/exits which have different allowable paths between them depending on the state of the gadget. There are four possible states of the Set-Verify gadget: Broken, Unset, Set, and Verified. The three relevant states are depicted in Figures~\ref{setVerifyDiagrams} and \ref{SetVerifyStateTransition}. Entrances to the gadget are labeled $S_i, S_o, V_i, V_o$ and the directed arrows show the allowed passages in the shown state. Further details are given in Appendix~\ref{sec:NPCrossover}.
In the Unset state, the $S_i \rightarrow S_o$ transition is the only possibility, changing the state to Set. In the Set state, the $S_o \rightarrow S_i$ transition is possible, changing the state back to Unset, as well as the $V_i \rightarrow V_o$ transition, which changes the state to Verified. Finally, from the Verified state, the only transitions possible are $V_o \rightarrow V_i$, changing the state back to Set, and $V_i \rightarrow V_o$, leaving the state as Verified. In the Broken state, the only possible transition is $S_o \rightarrow S_i$, changing the state to Unset. Any time we would enter the Broken state, we could instead enter the Set state, which allows strictly more transitions, and therefore will be strictly more helpful in reaching the goal. The Broken state is not helpful towards reaching the goal, so we will disregard its existence.

For the Set-Verify gadget in the Unset state, the $S_i$ entrance is the only one which allows the robot to move any blocks. From the $S_i$ entrance it can traverse to $S_o$, and it can also pull block $2$ down behind them. Doing so will allow a traversal from $V_i$ to $V_o$. To traverse back from $S_o$ to $S_i$, the robot must first traverse back from $V_o$ to $V_i$. Then, when the robot travels back from $S_o$ to $S_i$, it must push block $2$ back, ensuring the $V_i$ to $V_o$ traversal is impossible. Further, access to any sequence of entrances will not allow the robot to alter the system to allow traversals between the $V_i$ and $S_i$ entrances. 

Since the Set-Verify gadget has no hallways with length greater than $3$, any capabilities the robot may have of pushing or pulling more than one block at a time are irrelevant. Thus, the following proof will apply for all positive values of $j$ and $k$ in Push-$j$ Pull-$k$.

\begin{figure}[!ht]
  \centering
    \begin{subfigure}[b]{0.3\textwidth}
    \includegraphics[width=\textwidth]{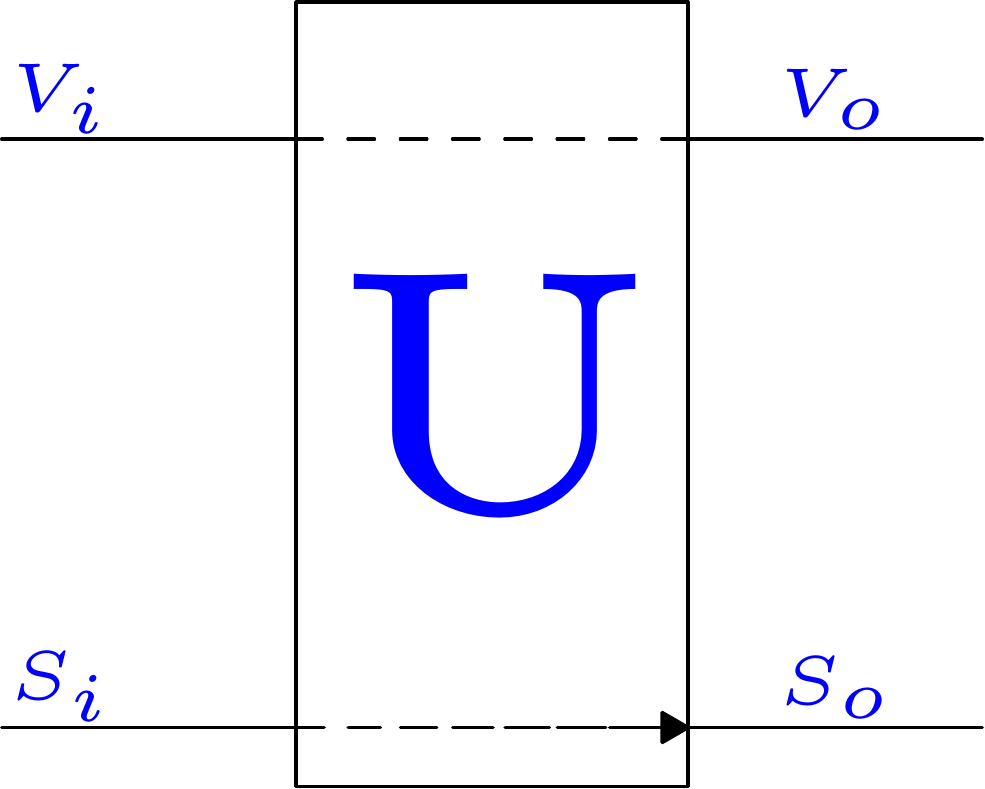}
    \caption{Abstract Unset Set-Verify}
    \vspace{15pt}
    \end{subfigure}
    \hfill
    \begin{subfigure}[b]{0.3\textwidth}
    \includegraphics[width=\textwidth]{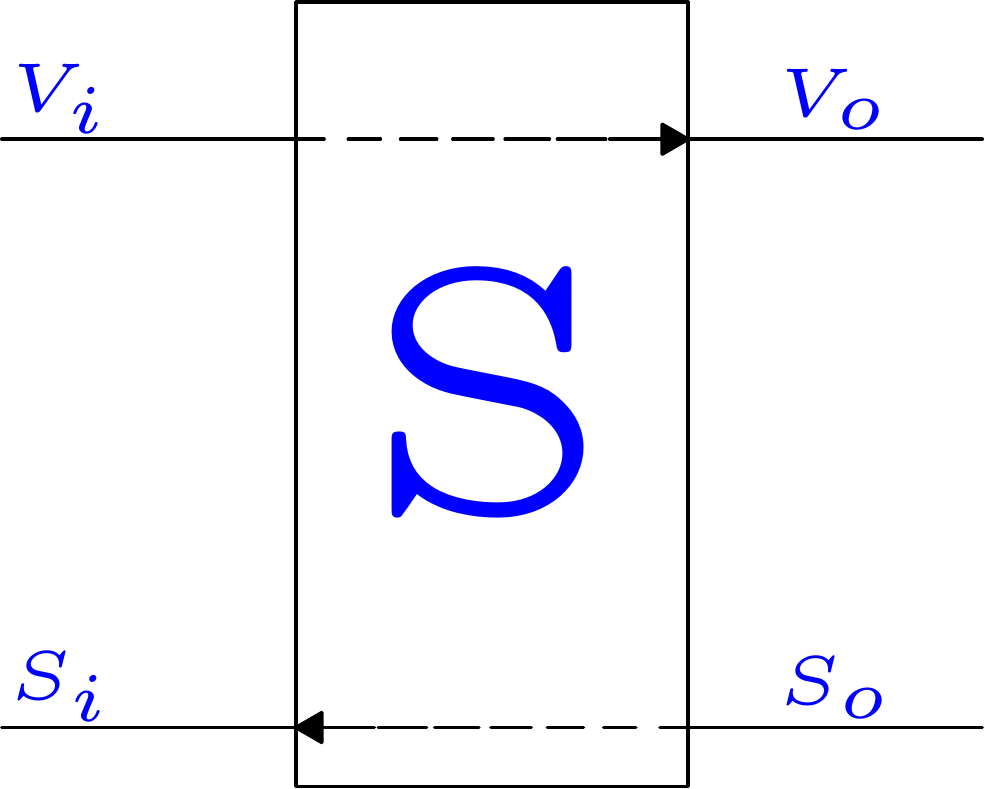}
    \caption{Abstract Set Set-Verify}
    \vspace{15pt}
    \end{subfigure}
    \hfill
    \begin{subfigure}[b]{0.33\textwidth}
    \includegraphics[width=.9\textwidth]{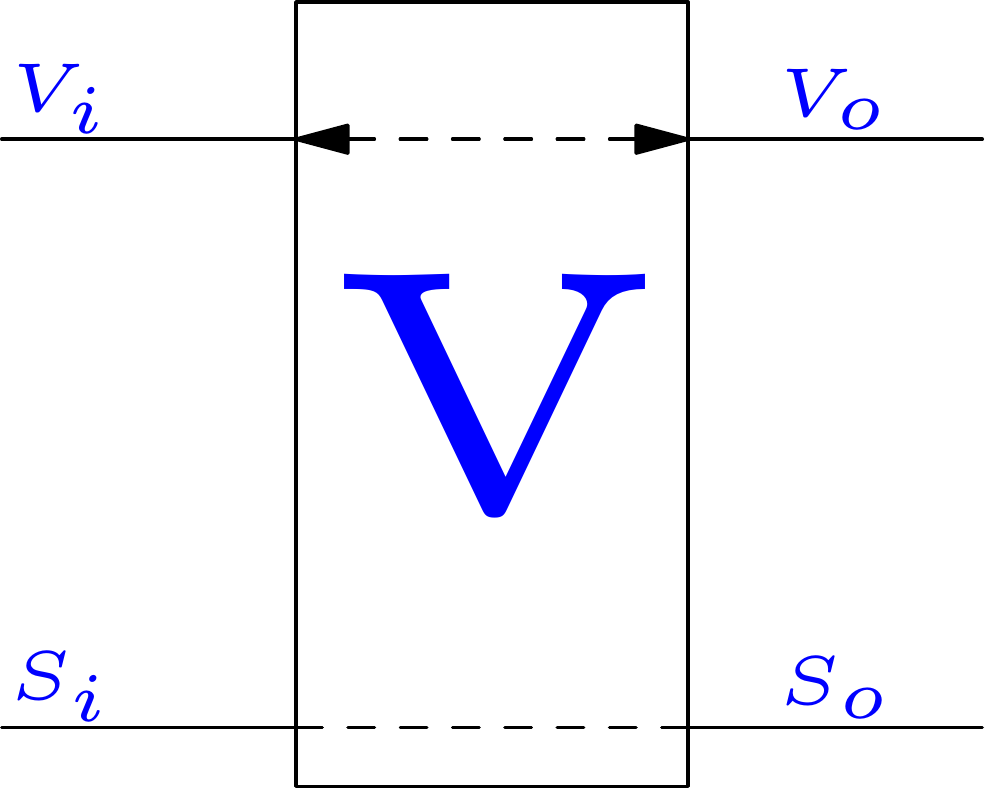}
    \caption{Abstract Verified Set-Verify}
    \vspace{15pt}
  \end{subfigure}  
   \vspace{10pt}
  \begin{subfigure}[b]{0.3\textwidth}
    \includegraphics[width=\textwidth]{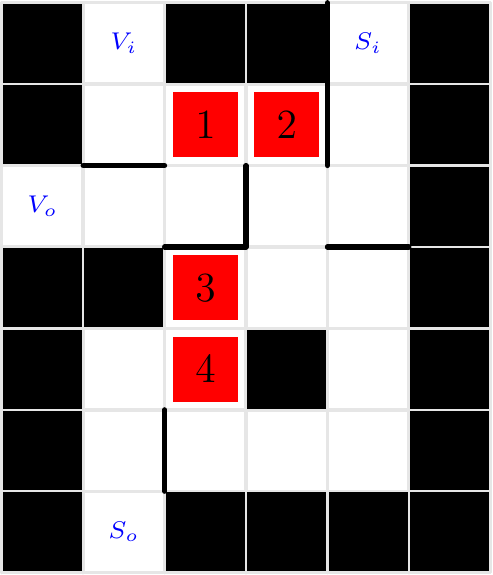}
    \caption{Set-Verify, unset state}
    \label{SetVerifyUnset}
  \end{subfigure}
  \hfill
  \begin{subfigure}[b]{0.3\textwidth}
    \includegraphics[width=\textwidth]{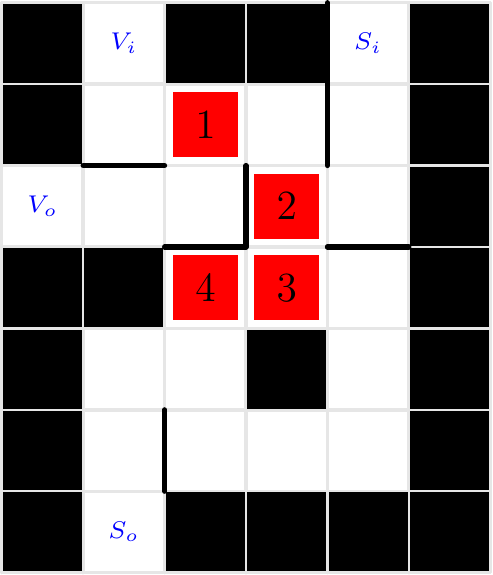}
    \caption{Set-Verify, set state}
    \label{SetVerifySet}
  \end{subfigure}
  \hfill
  \begin{subfigure}[b]{0.3\textwidth}
    \includegraphics[width=\textwidth]{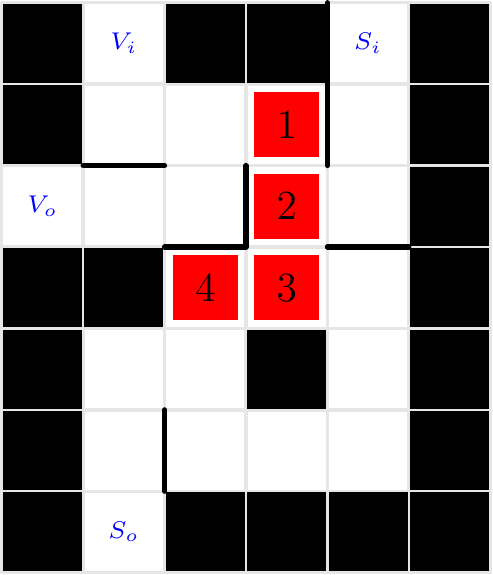}
    \caption{Set-Verify, verified state}
    \label{SetVerifyVerified}
  \end{subfigure}
  \caption{Diagrams of three of the states of Set-Verify gadgets along with their construction in a push-pull block puzzle. Red blocks are moveable, black blocks are fixed, thick black lines are thin walls.}
  \label{setVerifyDiagrams}
\end{figure}

\begin{table}
\begin{minipage}{.3\textwidth}
{\setlength\tabcolsep{4pt}
\begin{tabular}{>{$} l <{$} >{$} l <{$} >{$} l <{$} >{$} l <{$}}
   U: & & & \\
   &(U, s_i)& \rightarrow& (S, S_o) \\
\end{tabular}}
\end{minipage}
\begin{minipage}{.3\textwidth}
{\setlength\tabcolsep{4pt}
\begin{tabular}{>{$} l <{$} >{$} l <{$} >{$} l <{$} >{$} l <{$}}
  S: & & & \\
   &(S, s_o)& \rightarrow& (U, S_i) \\
   &(S, v_i)& \rightarrow& (V, v_o) \\
\end{tabular}}
\end{minipage}  
\begin{minipage}{.3\textwidth}
{\setlength\tabcolsep{4pt}
\begin{tabular}{>{$} l <{$} >{$} l <{$} >{$} l <{$} >{$} l <{$}}
   V: & & & \\
   &(V, v_i)& \rightarrow& (V, v_o) \\ 
   &(V, v_o)& \rightarrow& (V, v_i) \\ 
   &(V, v_o)& \rightarrow& (S, v_i) \\ 
\end{tabular}}
\end{minipage}
\caption{State transitions of a Set-Verify gadget as seen in Figure~\ref{setVerifyDiagrams}}
\label{SetVerifyStateTransition}
\end{table}

\subsubsection{Variable and Clause Gadgets}
\label{sec:2DPushPull3SAT}

We will be making use of the Set-Verify gadget to produce the literals in our 3SAT formula. One significant difficulty with this model is the complete reversibility of all actions. Thus we need to take care to ensure that going backward at any point does not allow the robot to cheat in solving our 3SAT instance. The directional properties of the Set-Verify allow us to create sections where we know if the robot exits, it must have either reset everything to the initial configuration or have correctly proceeded through that gadget.

Our literals will be represented by Set-Verify gadgets. They are considered true when the $V_i$ to $V_o$ traversal is possible, and false otherwise. Thus we can set literals to true by allowing the robot to run through the $S_i$ to $S_o$ passage of the gadget. This allows a simple clause gadget, shown in Figure~\ref{fig:NPClauseGadget}, consisting of splitting the path into three hallways, each with the corresponding verify side of our literal. We can then pass through if any of the literals are set to true, and cannot pass otherwise. Notice that the Unset and Set states do not have a backward transition. Thus the only way to go back through the clause is through the verified literal, after which the clause has been reset to the state it was in before the robot went through it.

The variables will be encoded by a series of passages which split to allow either the true or negated literals to be set, shown in Figure~\ref{fig:NPVariableGadget}. Once the robot has gone through at least one gadget in one hallway, there are only two possibilities remaining: either the robot can continue down the hall setting more literals to true, or the robot can go back through the gadget it has just exited, returning it to its unset state. Thus, before entering or after exiting a hallway all of the literals in that hallway will be in the same state. Additionally, unset gadgets do not allow a transition from $S_o$ to $S_i$, which means at any point while setting variables, if the robot decides to go back it can only return through a hallway which has been switched to the set state. Going back through these returns them to the unset state, putting that variable gadget back in its initial configuration before the robot interacted with it.

\begin{figure}[!ht]
\begin{minipage}{.36\textwidth}
    \includegraphics[width=\textwidth]{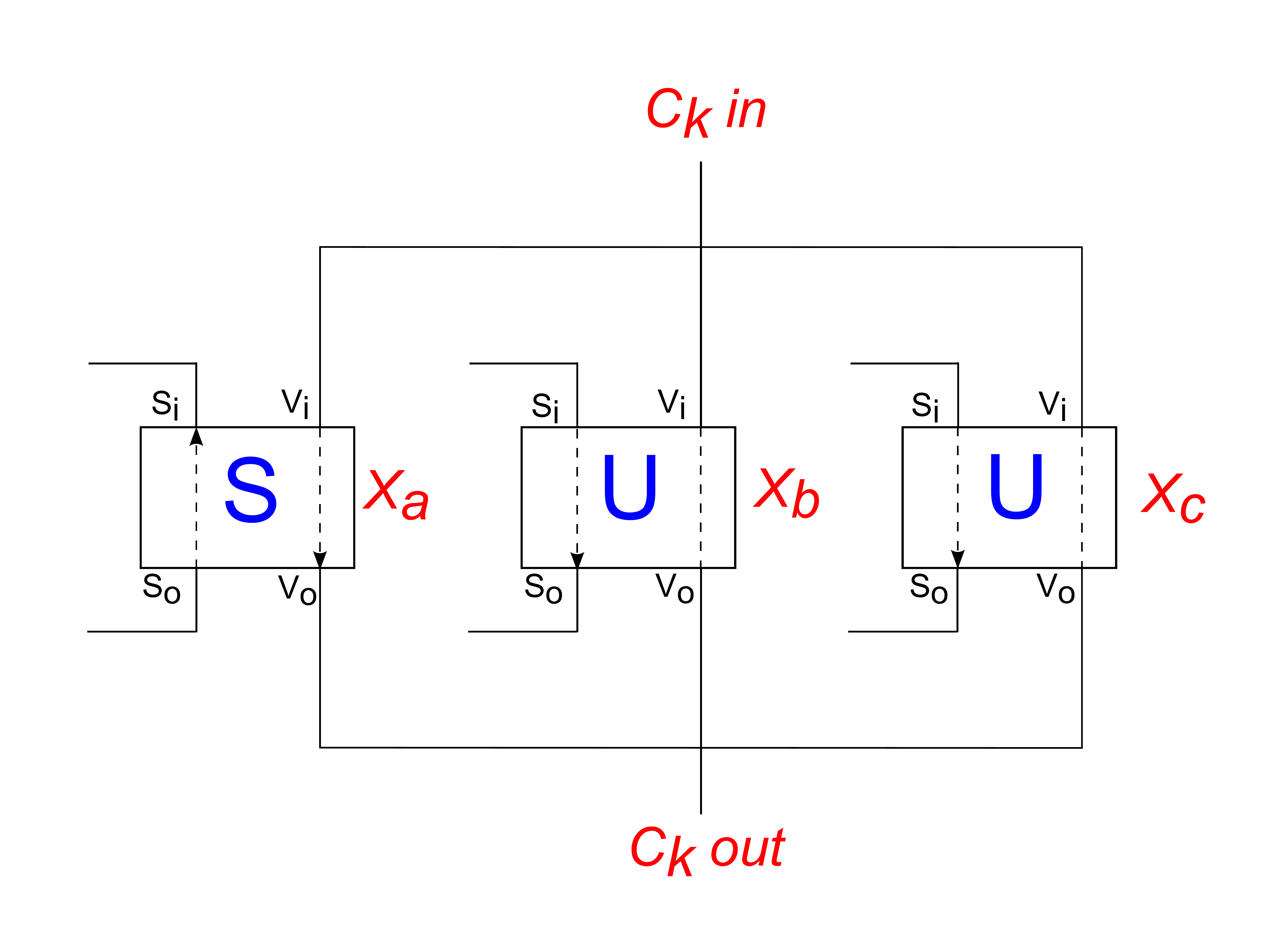}
    \caption{Clause gadget, $C_k$, with variables $x_a=1$, $x_b=0$, $x_c=0$.}
    \label{fig:NPClauseGadget}
\end{minipage}
\hspace{5mm}
\begin{minipage}{.57\textwidth}
  \centering
    \includegraphics[width=.8\textwidth]{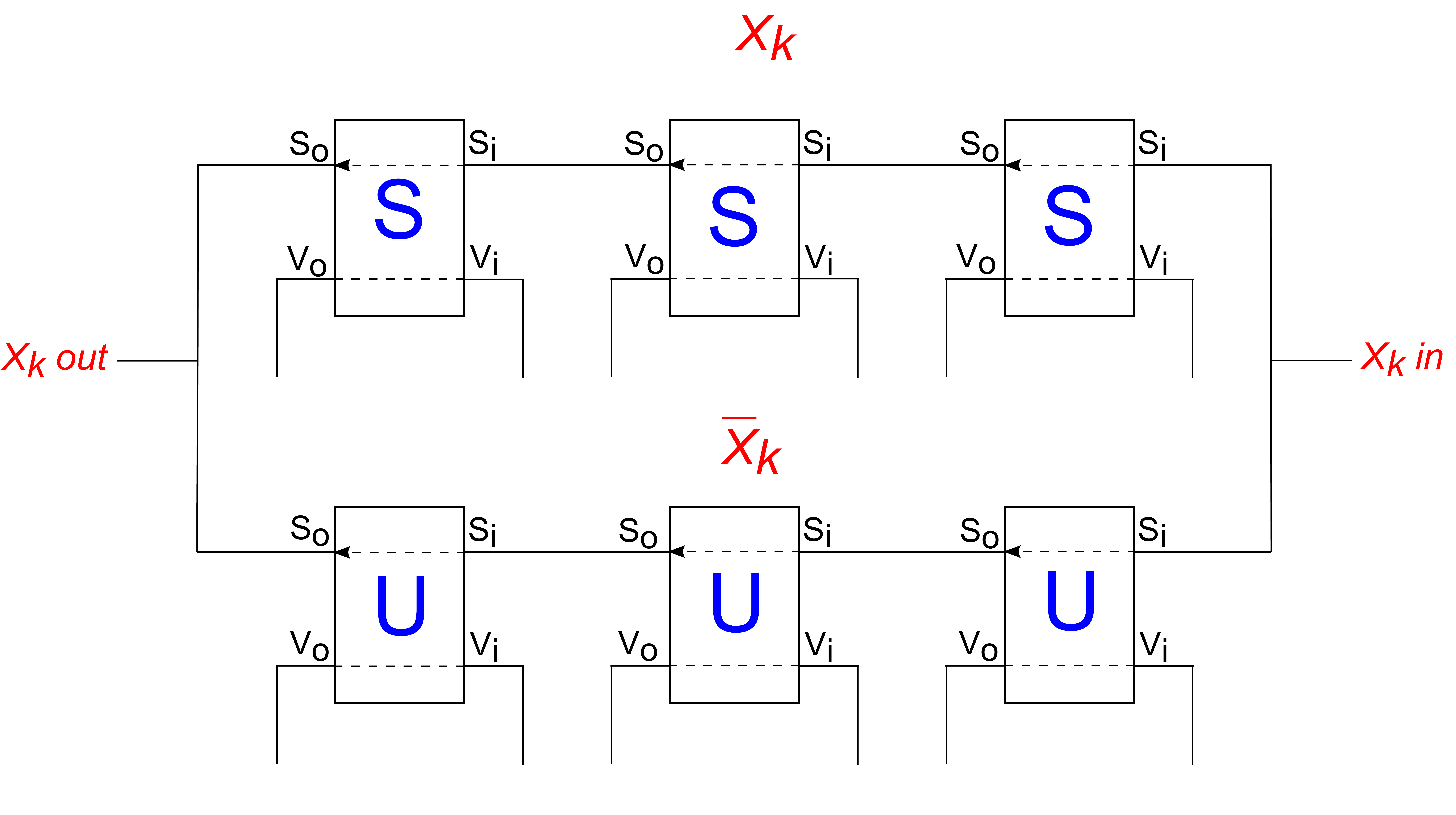}
    \caption{A variable gadget representing $X_k$ occurring in six clauses, three of those times negated. The value of the variable has been set to true.}
    \label{fig:NPVariableGadget}
\end{minipage}
\end{figure}

%
%
\subsubsection{Crossover Gadgets}
\label{sec:NPCrossover}
In this section we build up the needed two use crossover gadget from a series of weaker types of crossover gadgets. One may wonder why we need crossover gadgets when Planar 3SAT is NP-complete. This only guarantees that connecting the verticies to their clauses by edges results in a planar graph, it does not ensure that we can navigate our robot between all of these gadgets in a planar manner or that our gadgets themselves are planar. The most obvious issue can be seen in the clause gadget (Figure~\ref{fig:NPClauseGadget}) where one of the Set Verify gadgets must lie between the other two hallways, but must also be accessible by its associated variable gadget.

\paragraph{Directed Destructive Crossover} This gadget, depicted in Figure~\ref{fig:DestructiveCrossover}, allows either a traversal from $a$ to $a'$ or $b$ to $b'$. Once a traversal has occurred, that path may be traversed in reverse, but the other is impassable unless the original traversal is undone.

First, observe that transitions are initially only possible via the $a$ and $b$ entrances, since the transitions possible through a Set-Verify in the Set state can be entered through $V_i$ and $S_o$, not $S_i$. Assume without loss of generality that the gadget is entered at $a$. This changes the state of the left Set-Verify to Verified. At this point, only the right $S_o$ and left $V_o$ transitions are passable. Taking the $V_o$ transition either reverts all changes to the original state, or leaves the left crossover in the Verified state, which allows strictly less future transition than the original state. Therefore, we will disregard that option. Taking the $S_o$ transition changes the right Set-Verify to Unset, and completes the crossover. At this point, the only possible transition is to undo the transition just made, from $a'$ back to $a$, restoring the original state. The gadget could be entered via $a$, but the robot would only be able to leave via $a$, possibly changing the state to Set. Both options result in the robot exiting out its original entrance, and allow the same or less future transtions, so we may disregard those options. Thus, the only transition possibilities are as stated above.

\begin{figure}[!ht]
  \centering
  \begin{subfigure}[b]{0.47\textwidth}
    \includegraphics[width=\textwidth]{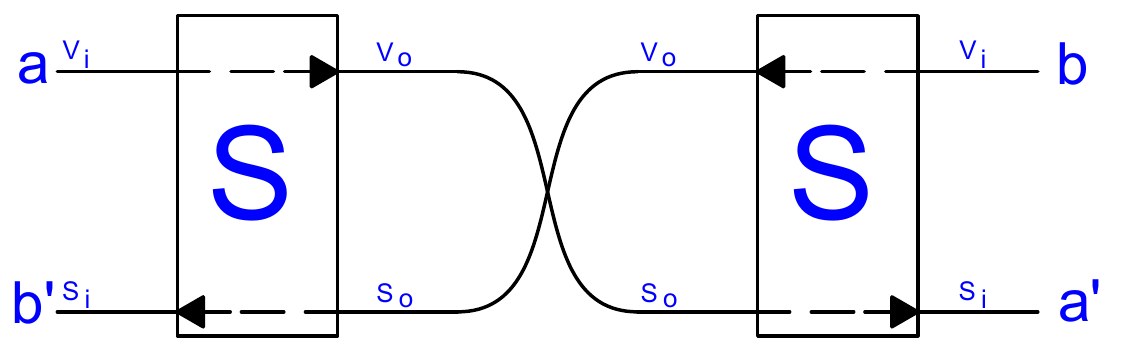}
    \caption{The directed destructive crossover constructed from two connected Set-Verify gadgets initialized in the set position.}
    \label{fig:DestructiveCrossover}
  \end{subfigure}
  \hfill
  \begin{subfigure}[b]{0.47\textwidth}
    \includegraphics[width=\textwidth]{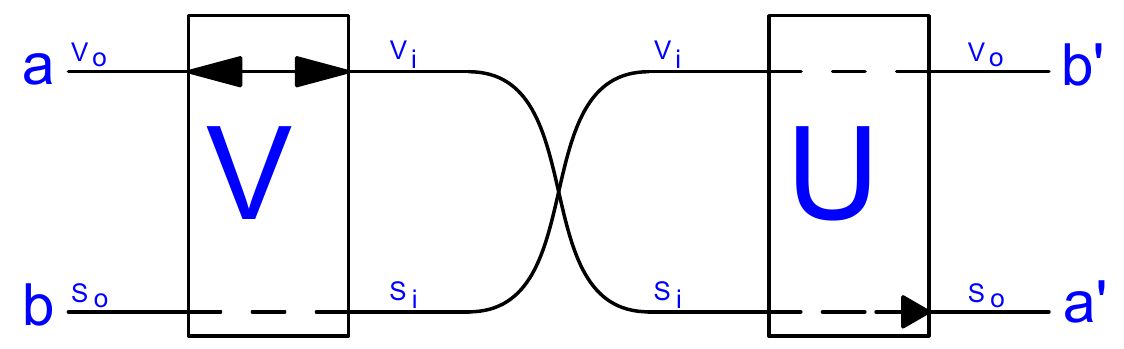}
    \caption{The in-order directed crossover constructed from two connected Set-Verify gadgets initialized in the verified and unset positions.}
    \label{fig:InOrderCrossover}
  \end{subfigure}
  \caption{Two types of crossover gadgets}
\end{figure}

\begin{figure}[!ht]
  \centering
  \begin{subfigure}[b]{0.48\textwidth}
    \includegraphics[width=\textwidth]{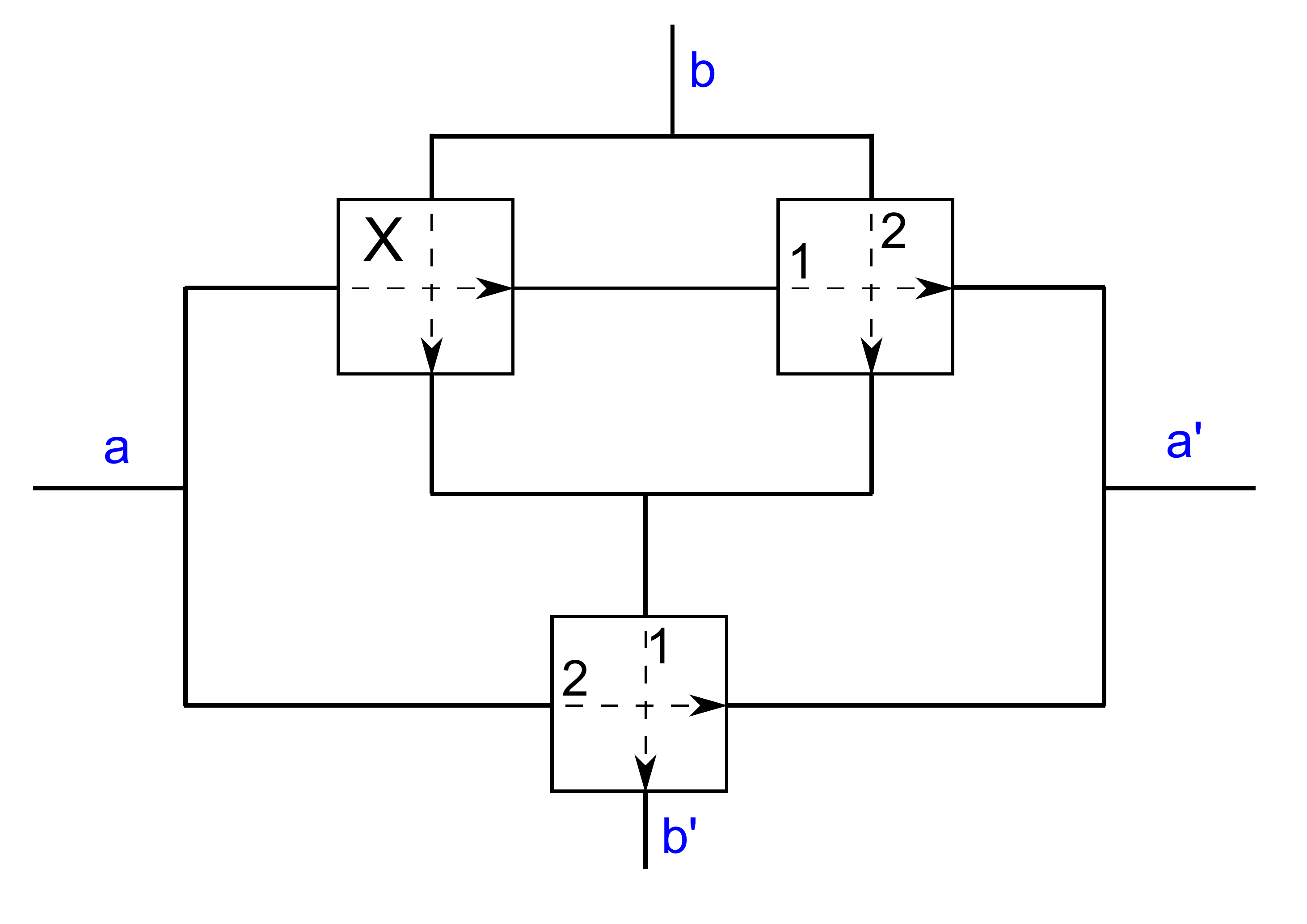}
    \caption{The one use directed crossover is constructed from a directed destructive crossover and two in-order directed crossovers.}
    \label{OneUseCrossover}
  \end{subfigure}
  \hfill
  \begin{subfigure}[b]{0.43\textwidth}
    \includegraphics[width=\textwidth]{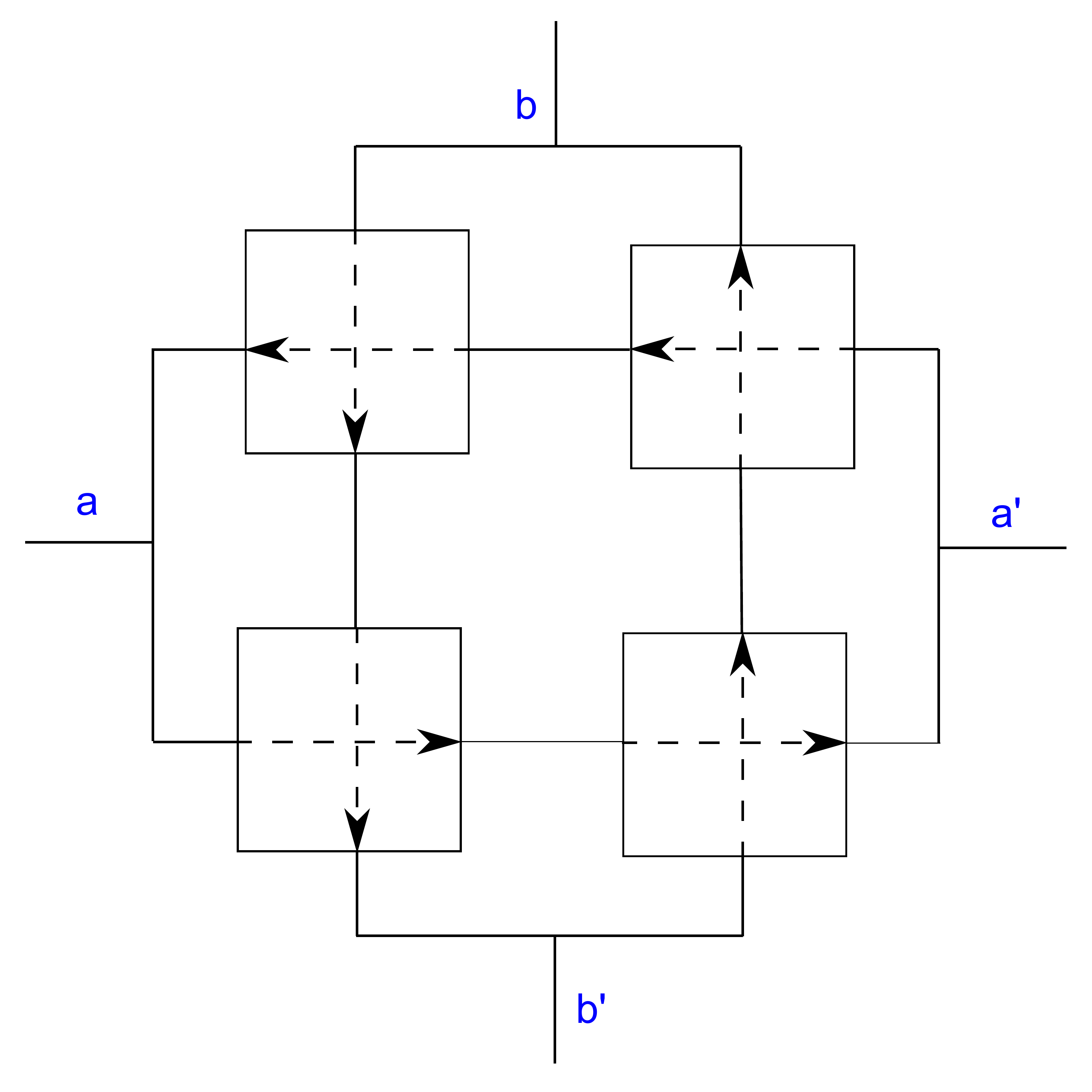}
    \caption{A full one use crossover constructed from four directed one use crossovers.}
    \label{full_one_use_crossover}
  \end{subfigure}
  \caption{Composite crossover gadgets}
\end{figure}

\paragraph{In-order Directed Crossover} This gadget, depicted in Figure~\ref{fig:InOrderCrossover} allows a traversal from $a$ to $a'$, followed by a traversal from $b$ to $b'$. These traversals may also be reversed.

Initially, no entrance is passable except for $a$, since $V_o$ is passable only in the Verified state, and $S_o$ is
passable only in the Set state. Once the left $V_o \rightarrow V_i$ transition is made, the robot has 2 options.
It can either change the left Set-Verify gadget's state to Set, or leave it as Verified. In either case, the $S_i$
entrance on that toggle is impassable, since a $S_i$ entrance may only be traversed in the Unset state. The
only transition possible on the right crossover is $S_i \rightarrow S_o$, changing the state from Unset to Set.
This completes the first crossing.

\begin{wrapfigure}{hr}{0.45\textwidth}
\vspace{-8mm}
  \centering
    \includegraphics[width=.45\textwidth]{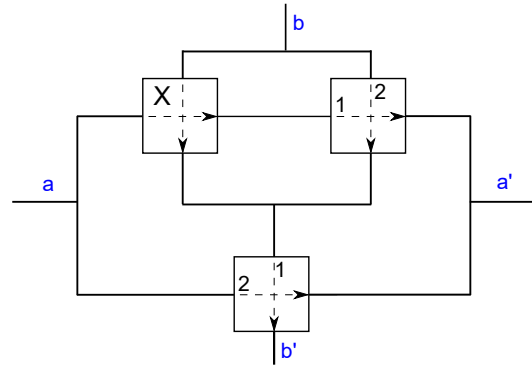}
    \caption{The two use directed crossover is constructed from a directed destructive crossover and two in-order directed crossovers.}
    \label{fig:OneUseCrossover}
    \vspace{-10mm}
\end{wrapfigure}

Now, there are at most 2 transitions possible: from $a'$ back to $a$, undoing the whole process, or entering at $b$. Note that entering at $b$ is only possible if the left Set-Verify is in the Set state, so let us assume that state change occurred. In that case, the left $S_o \rightarrow S_i$ transition may be performed, changing the left Set-Verify's state to Unset. At that point, the only possible transitions are back to $b$, or through the right Set-Verify's
$V_i \rightarrow V_o$ transition, completing the second crossover.

If the left Set-Verify was left in the Verify state, strictly less future transitions are possible compared to the case where it was changed into the set state, so we may disregard that possibility.

\paragraph{Two Use Directed Crossover} 
The Two Use Directed Crossover, depicted in Figure~\ref{fig:OneUseCrossover}, is the gadget needed for our proof. It allows a traversal from $a$ to $a'$ followed by a traversal from $b$ to $b'$, or from $b$ to $b'$ and then $a$ to $a'$. These transitions may also be reversed.

It is constructed out of an In-order Directed Crossover gadget and a Destructive Directed Crossover, as shown in Figure~\ref{fig:OneUseCrossover}. The $a$ to $a'$ traversal is initially passable, and goes through both gadgets,
blocking the destructive crossover but leaving the in-order crossover open for the $b$ to $b'$ traversal. If the $a$ to $a'$ traversal does not occur, the $b$ to $b'$ traversal is possible via the destructive crossover.

\begin{wrapfigure}{hrt}{0.45\textwidth}
\vspace{-5mm}
  \centering
    \includegraphics[width=.4\textwidth]{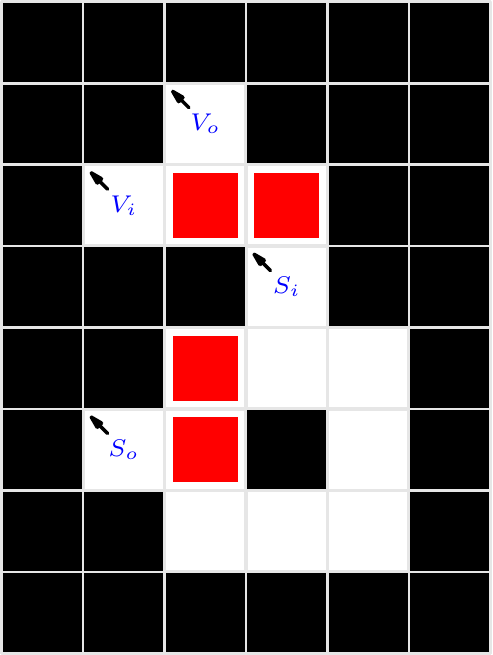}
    \caption{A Set-Verify gadget in 3D where the entrances and exits extend upward, notated by the diagonal arrows. This gadget is in the unset state.}
    \label{fig:3DSetVerify}
    \vspace{-10mm}
\end{wrapfigure}

In our construction with Set-Verify gadgets, the crossover also permits returning backwards through the crossings. However, these crossings can only be reversed in the same order they were made and never allow leaking into the other line. Thus no additional progress can be made via reversals, and so the gadget still works for NP-hardness.
\paragraph{Two Use Crossover} 
Four Directed Crossovers can be combined, as shown below, to create a crossover that can be traversed in any direction \cite{Push100}. This is not necessary for our proof but is shown for general interest. Unfortunately, the inability to go through this gadget multiple times in the same direction without first going back through means it likely isn't sufficient for PSPACE-completeness.

\begin{theorem}
\label{thm:2DNPhard}
Push-$k$ Pull-$l$ in 2D with thin walls is NP-hard.
\end{theorem}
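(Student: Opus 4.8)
The plan is to reduce from 3SAT, using the gadgets developed in the preceding sections. The overall structure follows the standard template for proving hardness of motion-planning puzzles: given a 3SAT instance $\varphi$ with variables $x_1, \ldots, x_n$ and clauses $C_1, \ldots, C_m$, I would construct, in polynomial time, a push-pull block puzzle (with thin walls) whose solvability is equivalent to the satisfiability of $\varphi$. The robot must traverse a path that forces it first to assign truth values to all variables and then to verify that every clause is satisfied, with the reversibility of the puzzle carefully controlled so that backtracking cannot let the robot cheat.

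First I would lay out the global wiring. The robot's path proceeds through two phases. In the \emph{variable-setting phase}, the robot passes through the variable gadgets (Figure~\ref{fig:NPVariableGadget}) in sequence; at each variable $x_i$ it chooses one of two hallways, corresponding to setting either the positive literals or the negated literals of $x_i$ to true by running through the $S_i \rightarrow S_o$ passages of the relevant Set-Verify gadgets. In the \emph{clause-checking phase}, the robot passes through each clause gadget $C_k$ (Figure~\ref{fig:NPClauseGadget}), which splits into three hallways, one per literal; the robot can cross $C_k$ only if at least one of its three literals was set to true, using that literal's $V_i \rightarrow V_o$ traversal. The goal cell is placed at the far end, reachable only after all clauses have been crossed. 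Each literal occurrence in a clause corresponds to a physical copy of a Set-Verify gadget, connected by hallways to its variable gadget (for setting) and embedded in its clause gadget (for verification).

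Next I would verify the correctness of the reduction in both directions. If $\varphi$ is satisfiable, a satisfying assignment tells the robot which hallway to pick in each variable gadget; this sets the appropriate literals to Set, which in turn enables the $V_i \rightarrow V_o$ traversal in each clause, so the robot reaches the goal. Conversely, if the robot reaches the goal, I must argue that the literals it set induce a consistent truth assignment satisfying every clause. Here I would invoke the directional properties established earlier: the Set-Verify gadget allows no $V_i \leftrightarrow S_i$ leakage (stated in Section~\ref{sec:SetVerifyGadgets}), the variable gadget forces all literals in a chosen hallway into the same state with the only backward option being a full reset, and the clause gadget resets itself after a verified crossing so that no residual state can be exploited. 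The key point is that reversibility cannot help: any attempt to backtrack either fully undoes prior progress or leaves the system in a state allowing strictly fewer future transitions, so without loss of generality the robot behaves monotonically through the two phases.

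Two remaining ingredients complete the argument. The gadget edges and hallways, laid out in the plane, will generally need to cross; I would route every such crossing through the Two Use Directed Crossover of Figure~\ref{fig:OneUseCrossover}, whose analysis above guarantees that each of the two lines can be traversed exactly once in each direction without leaking into the other, which suffices because the robot needs each wire at most once going forward (during setting) and once in reverse. Finally, since every hallway in these gadgets has length at most $3$, the robot's ability to push $k$ or pull $l$ blocks at once is never engaged, so the construction works uniformly for all positive $k$ and $l$. I expect the main obstacle to be the correctness direction: pinning down that the full reversibility of the state-space graph, combined across the many interacting gadget copies and crossovers, genuinely admits no global shortcut that a purely local gadget analysis might miss. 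The central lemma to nail down is that, in the composed puzzle, every reachable robot position corresponds to a partial-but-consistent execution of the two phases, from which the equivalence with satisfiability follows.
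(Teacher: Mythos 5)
Your proposal matches the paper's proof in all essentials: a 3SAT reduction with variable gadgets chained in series feeding Set-Verify literals, parallel-hallway clause gadgets chained to the goal, Two Use Directed Crossovers at all hallway crossings, the observation that hallways of length at most $3$ make the argument uniform in $k$ and $l$, and the same reversibility argument that backtracking either fully undoes progress or leaves strictly fewer options. The only detail the paper adds that you gloss over is fixing a global ordering on the crossing pathways so that each crossover's orientation is consistent with the order in which its two lines are used.
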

\begin{proof}
    We will reduce from 3SAT. Given a 3SAT instance with variables $(x_1, x_2, \ldots x_n)$ and clauses $(x_a, x_b, \overline x_c), \ldots$, we will construct an equivalent PushPull instance as follows: 

    First, we will set up the clause gadgets. Each clause gadget will look like Figure~\ref{fig:NPClauseGadget}, with all of the Set-Verify gadgets initially in the unset state. There will be one clause gadget for each clause in the 3SAT formula. The clauses will be linked together in series, $C_k$ $out$ to $C_{k+1}$ in. At the final clause gadget's exit, we will place the goal square.

    Next, we will set up the variable gadgets. For each variable $x_k$, there will be a variable gadget $X_k$, consisting of a positive literal pathway, connecting to every clause where the variable is used positively, and a negative literal pathway, connecting to every clause where the variable is negated, as shown in Figure~\ref{fig:NPVariableGadget}. These variable gadgets will be linked together in series, $X_k$ $out$ to $X_{k+1}$ $in$. The final variable gadget's $out$ exit will be linked to the first clause gadget's $in$. Just in front of the first variable gadget's $in$ entrance will be the start square.

    The connections between these gadgets will consist of empty hallways, except where such hallways would cross. The hallways inside the clause and variable gadgets will also need to cross, and we will handle them similarly. We need crossovers for this reduction, rather than reducing to a PlanarSAT variant, because we need crossovers just to make the clause gadgets work.
    
    At all crossings, we will place a Two Use Directed Crossover, from Figure~\ref{fig:OneUseCrossover}. The orientation of the gadget will be chosen according to a specified ordering, where the later pathway will never be used before the earlier pathway, and no pathway will every be traversed twice in the same direction. The ordering is each variable gadget's hallways, in increasing order of the variable gadgets, followed by each clause gadget's hallways, in increasing order of clause gadgets. Within the variable gadgets, the ordering will be from in to out along the positive and negative lines, with the positive lines arbitrarily placed before the negative lines. The clause gadget hallways won't cross each other.

    The construction is complete. To see that it is solvable if and only if the corresponding SAT problem is satisfiable, first let us consider the case where the SAT problem is satisfiable. If the SAT problem is satisfiable, then there is an assignment of variables such that each clause is satisfied, e.g. has at least one true literal. Therefore, the PushPull construction is solvable. It can be solved by traversing each variable gadget via the side corresponding to the satisfying assignment, then traversing each clause, which is passable because it is satisfied. The crossovers do not impede traversal, since the path taken goes through each crossover at most once of each of its pathways, and strictly in the forward direction of the ordering which determined the orientation of the crossovers. Thus, the entire PushPull problem can be solved, as desired.

    Next, let us consider the case where the SAT problem is not satisfiable. Consider a partial traversal of the PushPull problem, from the start cell through the variable gadgets. Regardless of any reverse transitions through a variable gadget or interactions with its clause gadget, if the robot is beyond a given variable gadget exactly one of the variable lines must be set and the other must be unset. Likewise, the interactions with the crossover gadgets do not allow any transitions other than within the variable gadgets, regardless of reversals. Moreover, interactions with the clause gadgets only change the state of Set-Verify gadgets corresponding to literals between the Set and Verified states. If a Set-Verify is Unset, its state cannot be altered via its verify line ($V_i - V_o$).

    Thus, regardless of the robot's prior movements, the only literals that will be Set or Verified are at most those corresponding to a single assignment for each variable. No two literals corresponding to opposite assignments of the same variable will every be in the Set or Verified states at the same time. 

    Since the SAT problem is assumed to be unsatisfiable, no assignment of variables will satisfy every clause. Thus, as the robot exits the variable gadgets and enters the clause gadgets, for any prior sequence of moves, there must be some clause gadget which has all of its literals in the Unset state, corresponding to the unsatisfied clause for this setting of variables. Since all clauses must be traversed to reach the goal cell, and a clause cannot be traversed if all of its literals are Unset, the robot cannot reach the goal cell. Thus, the PushPull problem is unsolvable.

    We have demonstrated that the PushPull problem is solvable if and only if the corresponding 3SAT instance is satisfiable. The reduction mentioned above is polynomial time reduction, as long as the hallways are constructed reasonably. Thus, Push-$k$ Pull-$l$ in 2D with thin walls is NP-hard.

\end{proof}

\subsection{3D Push-Pull is NP-hard}
\label{3DNPhard}
In this section we prove that 3D Push-$k$ Pull-$l$ with fixed blocks is NP-hard, for all positive $k$ and $l$. All of the hard work was done in the previous section. Here we will simply show how we can use the additional dimension to tweak the previous gadgets to build them without thin walls. We reduce from 3SAT, constructing our variables from chains of 3D Set-Verify gadgets, and our clauses from the verify side of the corresponding 3D Set-Verify gadget.

\begin{theorem}
3D Push-$k$ Pull-$l$ with fixed blocks is NP-hard, for all positive $k$ and $l$.
\end{theorem}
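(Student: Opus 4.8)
The plan is to reuse the entire 2D reduction from Theorem~\ref{thm:2DNPhard} essentially verbatim, replacing only the two features that genuinely require the plane: the thin walls inside the Set-Verify gadget, and the crossover gadgets. Both are eliminated by exploiting the third dimension.

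First I would build a 3D Set-Verify gadget out of fixed and movable blocks alone, with no thin walls, whose legal state transitions coincide exactly with the abstract transitions of Table~\ref{SetVerifyStateTransition}. The idea, illustrated by Figure~\ref{fig:3DSetVerify}, is that every place where the 2D construction used a thin wall to forbid passage between two adjacent empty cells, we instead separate those cells along the $z$-axis and occupy the intervening position with a fixed block, so the unwanted adjacency simply ceases to exist. The four ports $S_i, S_o, V_i, V_o$ are routed upward out of the gadget so that copies can be connected by vertical hallways without interfering with one another. I would then re-verify the gadget's behavior by enumerating the reachable configurations from each state and each entered port, checking that the only transitions are those listed for Unset, Set, and Verified, and that (as before) no sequence of moves connects the $S$ line to the $V$ line. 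Because the gadget still contains no hallway longer than three cells, the push-$k$ and pull-$l$ capabilities remain irrelevant for every positive $k$ and $l$, exactly as argued in Section~\ref{sec:SetVerifyGadgets}.

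Second, with the 3D Set-Verify in hand, I would assemble variable and clause gadgets by the same recipe as in Section~\ref{sec:2DPushPull3SAT}: each literal is a Set-Verify (true iff the $V_i \to V_o$ traversal is passable), each variable gadget is a chain of Set-Verify gadgets split into a positive and a negative line, and each clause gadget splits the robot's path into three verify-side hallways so that the clause is traversable iff at least one literal is Verified. The crucial simplification is that in 3D we no longer need any crossover gadget at all: whenever two connecting hallways would have to cross, we route one of them at a different height and let it pass over the other through a short fixed-block tunnel. This removes the most delicate part of the 2D argument, namely the destructive, in-order, and two-use crossovers of Section~\ref{sec:NPCrossover}, entirely.

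Finally, the correctness argument is identical to that of Theorem~\ref{thm:2DNPhard}. If the 3SAT instance is satisfiable, the robot traverses each variable gadget along the satisfying side, setting the corresponding literals, and then passes every clause; conversely, the directional properties of the Set-Verify guarantee that after the variable gadgets at most one assignment per variable has its literals Set or Verified, and reversals cannot leak progress between the two lines, so an unsatisfiable instance always leaves some clause with all literals Unset and hence impassable. The reduction is clearly polynomial. The main obstacle is the first step: exhibiting an explicit three-dimensional block configuration with no thin walls whose transition table matches the abstract Set-Verify, and ruling out any spurious transition that the extra dimension or the full reversibility of push and pull might introduce.
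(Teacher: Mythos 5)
Your proposal matches the paper's proof essentially verbatim: the paper likewise replaces the thin-walled 2D Set-Verify with the 3D gadget of Figure~\ref{fig:3DSetVerify} (using the extra dimension to eliminate both the thin walls and all crossover gadgets), reuses the variable and clause constructions of Section~\ref{sec:2DPushPull3SAT}, and notes that all hallways have length at most $3$ so the argument holds for every positive $k$ and $l$. The only caveat, which you correctly flag yourself, is that the explicit 3D gadget must be exhibited and its transition table verified; the paper does this by direct inspection of Figure~\ref{fig:3DSetVerify}.
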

\begin{proof}
We follow the proof of Theorem~\ref{thm:2DNPhard} using a modified Set-Verify gadget, shown in Figure~\ref{fig:3DSetVerify}.  It can be easily checked that this has the same properties as the Set-Verify given in Section~\ref{sec:SetVerifyGadgets}. We do note that the cyclic ordering of the entrances in the 3D Set-Verify is different from that of the 2D Set-Verify, however this is not important as we no longer need to construct crossovers. With a functional Set-Verify gadget, the remaining constructions of variables and clauses proceeded as in Section \ref{sec:2DPushPull3SAT}. No crossover gadgets are needed since we are working in 3D. Finally, we note that all blocks are in hallways of length at most 3, thus the gadgets still function as described for any positive push and pull values.

As before, in the unset state the only possible traversal is $S_i$ to $S_0$. This traversal allows the top right bock to be pulled down, moving the gadget into the set state. From here the $V$ to $V_0$ traversal is possible, as well as going back through the $S_0$ to $S$ pathway. However, the $S$ to $S_0$ traversal is not possible.

Variables are composed of hallways of 3D Set-Verify gadgets connected $S_0$ to $S$, one for each clause in which the variable appears, as in Figure~\ref{fig:NPVariableGadget}. Clauses are composed of three 3D Set-Verify gadgets connected in parallel as in Figure~\ref{fig:NPClauseGadget}. The details of these constructions follow those in Section~\ref{sec:2DPushPull3SAT} This completes the reduction from 3SAT. In addition, we note that all blocks are in hallways of length at most 3, thus the gadgets still function as described for any positive push and pull values.
\end{proof}

%
\section{PSPACE}
\label{3DPSPACE}
In this section we show the PSPACE-completeness of 3D push-pull puzzles with equal push and pull strength. We will prove hardness by a reduction from True Quantified Boolean Formula (also known as TQBF and 3QSAT), which asks whether, given a set of variables $\{x_1, x_2, \ldots x_n, y_1, y_2, \ldots y_n\}$ and a boolean formula $\theta(x_1, \ldots x_n, y_1 \ldots y_n)$ in conjunctive normal form with exactly three variables per clause, the quantified boolean formula $\forall y_1 \exists x_1 \forall y_2 \exists x_2 \ldots \theta(x_1, \ldots x_n, y_1, \ldots y_n)$ is true.\footnote{It is plausable a simpler reduction from Nondeterministic Constraint Logic\cite{GPCBook09} exists, since that problem has the advantage of also having an undirected state-space graph. However, its global (rather than agent based) was problematic and we failed to find such a reduction.}

We introduce a gadget called the 4-toggle and use it to simulate 3QSAT\cite{NPBook}. We construct the 4-toggle gadget in 3D push-pull block puzzles, completing the reduction. In particular we prove 3D Push1-Pull1 with thin walls is PSPACE-complete and 3D Push$i$-Pull$j$, for all positive $i=j$, is PSPACE-complete. A gap between NP and PSPACE still remains for 3D puzzles with different pull and push values, as well as for 2D puzzles. 


\subsection{Toggles}
We define an $n$-toggle to be a gadget which has $n$ internal pathways and can be in one of two internal states, $A$ or $B$. Each pathway has a side labeled $A$ and another labeled $B$. When the toggle is in the $A$ state, the pathways can only be traversed from $A$ to $B$ and similarly in the $B$ state it can only be traversed from $B$ to $A$. Whenever a pathway is traversed, the state of the toggle flips.

\begin{figure}[!ht]
\centering
\begin{subfigure}[t]{0.45\textwidth}
  \centering
    \includegraphics[width=0.8\textwidth]{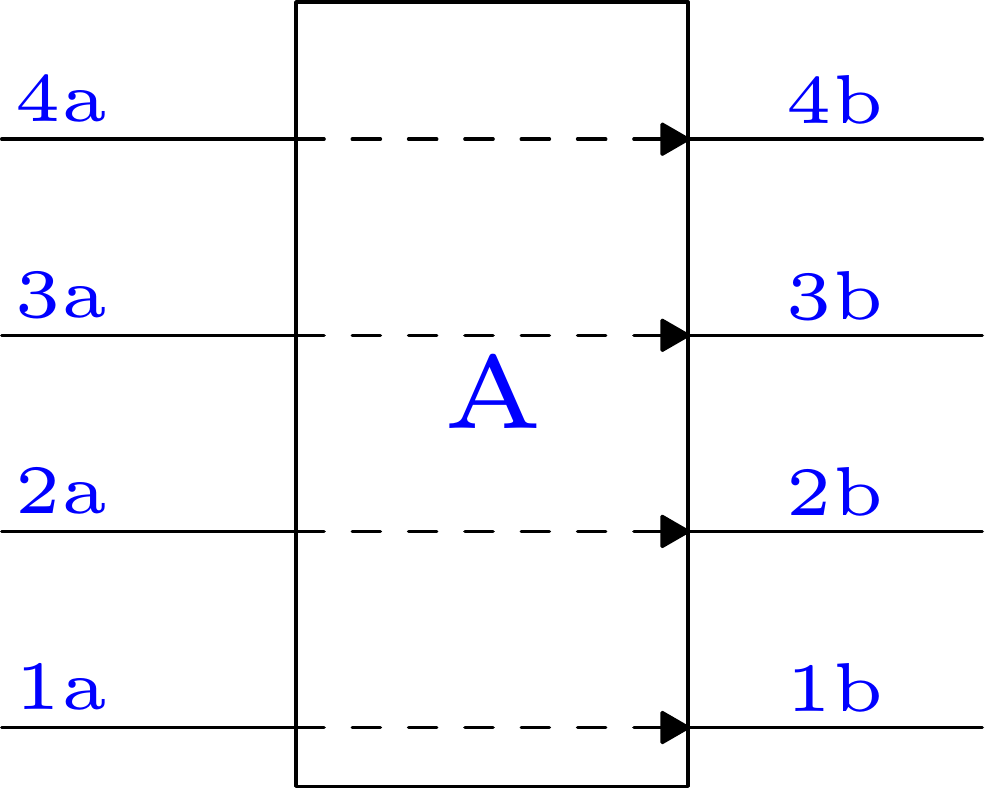}
    \caption{4-Toggle in state $A$. }
    \label{fig:Abstract4ToggleA}
\end{subfigure}
\begin{subfigure}[t]{0.45\textwidth}
  \centering
    \includegraphics[width=0.8\textwidth]{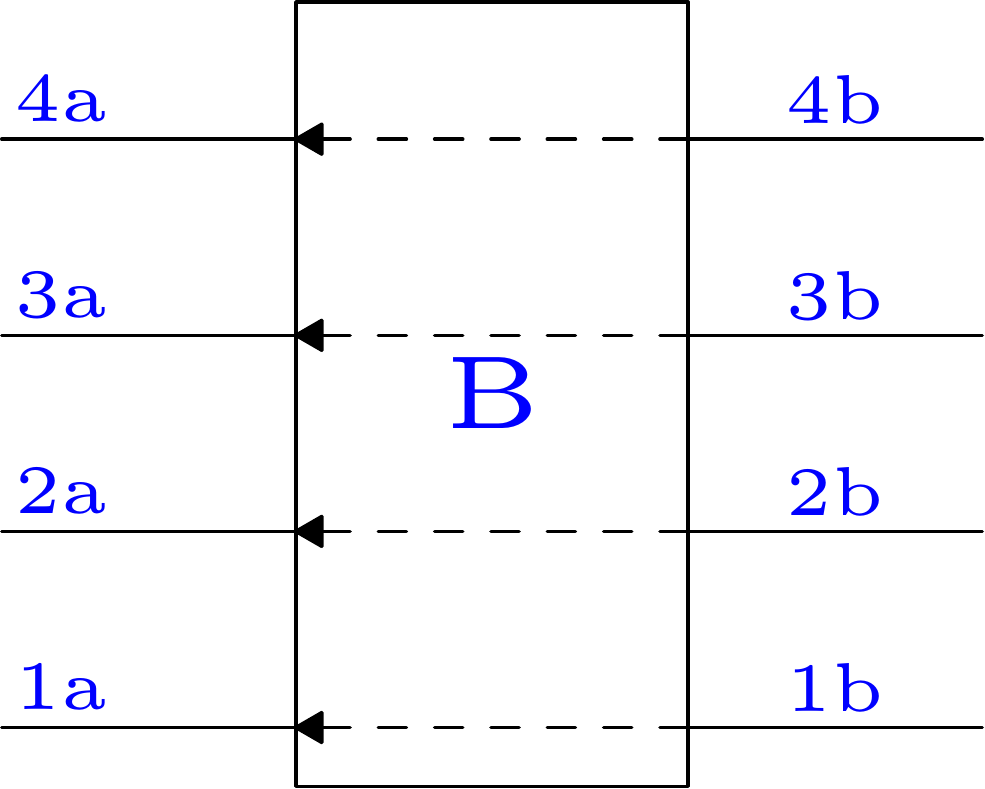}
    \caption{4-Toggle in state $B$. } 
    \label{fig:Abstract4ToggleB}
\end{subfigure}
\caption{Diagrams of the two possible states of a 4-Toggle.}
\label{fig:4-toggle}
\end{figure}

\begin{figure}[!ht]
\centering
\begin{subfigure}[t]{0.45\textwidth}
  \centering
    \includegraphics[width=0.8\textwidth]{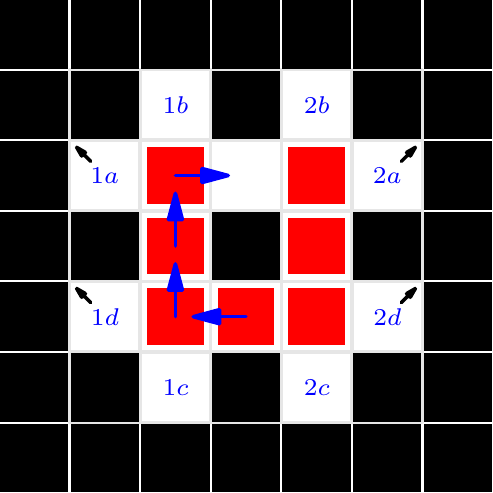}
    \caption{2-Toggle in state $A$. The arrows indicate the transition to state $B$.}
    \label{fig:2toggleA}
\end{subfigure}
\begin{subfigure}[t]{0.45\textwidth}
  \centering
    \includegraphics[width=0.8\textwidth]{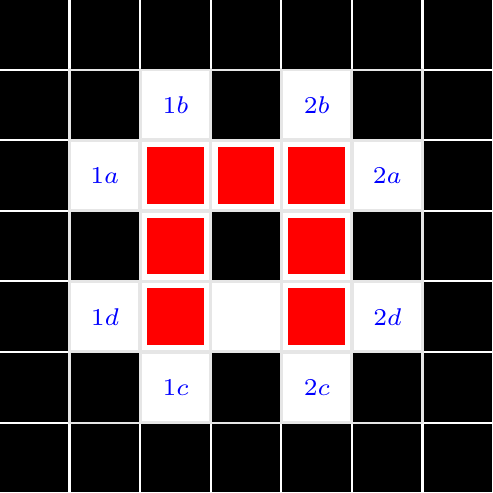}
    \caption{2-Toggle in state $B$.}
    \label{fig:2toggleB}
\end{subfigure}
    \begin{subfigure}[t]{0.45\textwidth}
  \centering
    \includegraphics[width=0.8\textwidth]{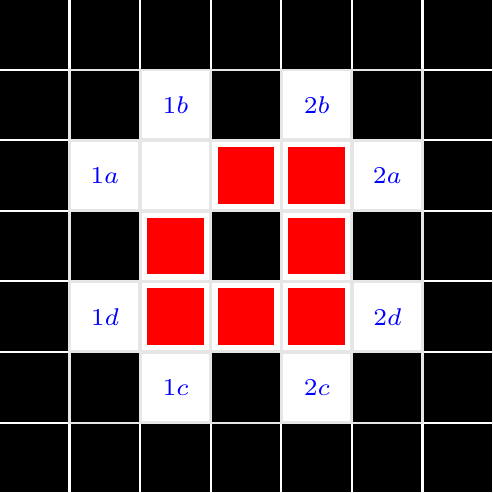}
    \caption{2-Toggle in one of four broken states.}
    \label{fig:broken2toggle}
    \end{subfigure}
  \begin{subfigure}[t]{.45\textwidth}
  \centering
    \includegraphics[width=.8\textwidth]{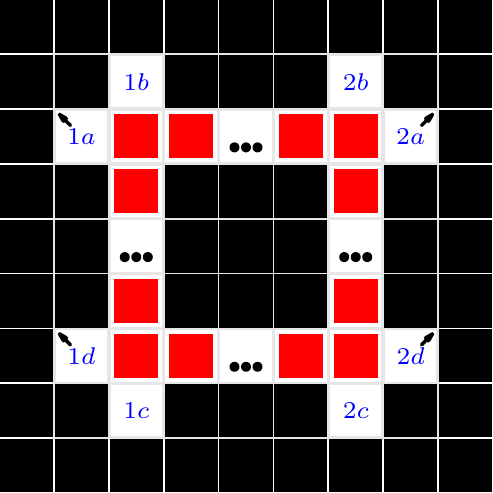}
    \caption{Construction of a 2-toggle when the robot can push or pull multiple blocks.}
    \label{fig:2ToggleK}
    \end{subfigure}
    \caption{2-Toggles constructed in a push-pull block puzzle.}
\end{figure}

\begin{table}
\begin{minipage}{.45\textwidth}
\centering
{\setlength\tabcolsep{4pt}
\begin{tabular}{>{$} l <{$} >{$} l <{$} >{$} l <{$} >{$} l <{$}}
   A: & & & \\
   &(A, 1a)& \rightarrow& (B, 1b) \\
   &(A, 2a)& \rightarrow & (B, 2b) \\
   &(A, 3a)& \rightarrow& (B, 3b) \\
   &(A, 4a)& \rightarrow & (B, 4b)  \\ \\
\end{tabular}}
\end{minipage}
\begin{minipage}{.45\textwidth}
\centering
{\setlength\tabcolsep{4pt}
\begin{tabular}{>{$} l <{$} >{$} l <{$} >{$} l <{$} l}
   B: & & & \\
   &(B, 1b)& \rightarrow& (A, 1a) \\
   &(B, 2b)& \rightarrow & (A, 2a) \\
   &(B, 3b)& \rightarrow& (A, 3a) \\
   &(B, 4b)& \rightarrow & (A, 4a)  \\ \\
\end{tabular}}
\end{minipage}
\caption{State transitions of a 4-toggle as seen in Figure~\ref{fig:4-toggle}}
\label{4ToggleStateTransition}
\end{table}

\begin{figure}[!ht]
  \centering
  \begin{subfigure}[t]{.45\textwidth}
    \includegraphics[width=\linewidth]{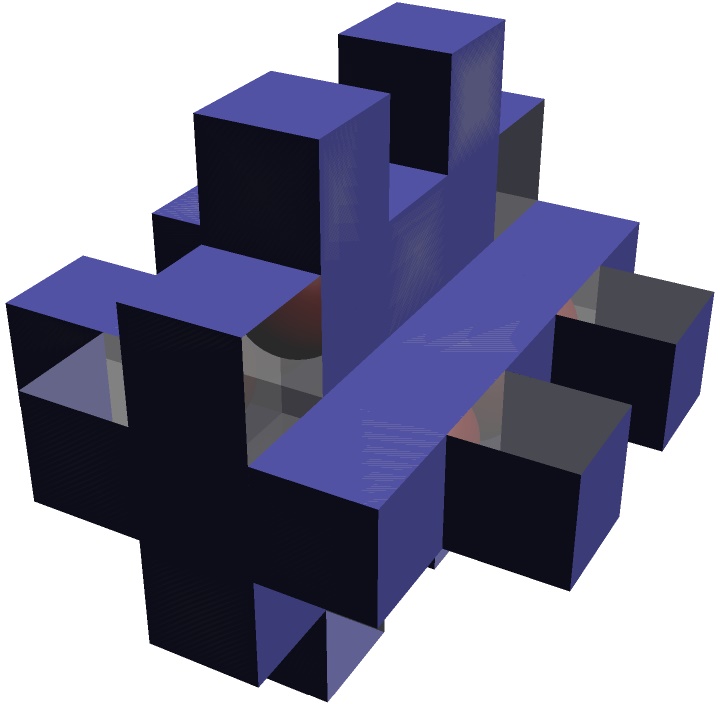}
    \caption{Diagram of a 4-toggle showing impassible surfaces.}
    \label{fig:4Toggle3D}
  \end{subfigure}
  \hfill
  \begin{subfigure}[t]{.45\textwidth}
    \includegraphics[width=\linewidth]{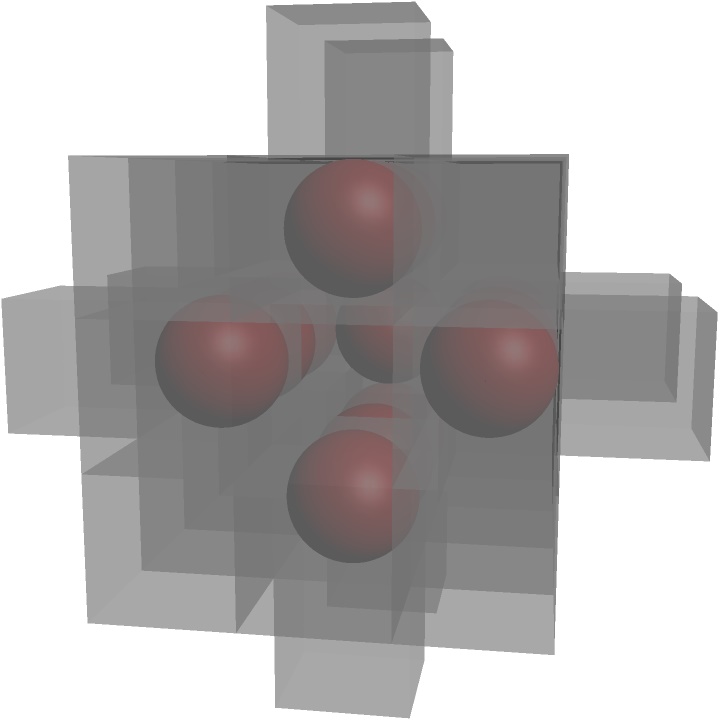}
    \caption{Diagram of the internals of a 4-toggle.}
    \label{fig:4Toggle3DBasic}
  \end{subfigure}
  \caption{3D diagrams of 4-toggles. Red spheres are blocks and blue surfaces are impassable.}
\end{figure}

Figure~\ref{fig:2toggleA} acts as a 2-toggle. The locations $1a$, $1d$, $2a$, and $2d$, are all entrances and exits to the 2-toggle, while $1b$ connects directly to $1c$, and $2b$ connects directly to $2c$. Notice that there is a single block missing from the ring of eight blocks. When the missing block is on top, as diagrammed, it will represent state $A$, and when it is on the opposite side, we call it state $B$. Notice that in state $A$, it is impossible to enter through entries $1d$ or $2d$. When we enter in the $1a$ or $2a$ sides, we can follow the moves in the series of diagrams to exit the corresponding $1d$ or $2d$ side, leaving the gadget in the $B$ state. One can easily check that the gadget can only be left in either state $A$, $B$, or a broken state with the empty square left in a corner. Notice that in the broken state, every pathway except the one just exited is blocked. If we enter through that path, it is in exactly the same state as if it had been in an allowed state and entered through the corresponding pathway normally. For example, in the diagram one can only enter through $1a$ and after doing so the blocks are in the same position as they would be after entering in path $1a$ on a 2-toggle in state $A$. Thus the broken state is never more useful for solving the puzzle and can be safely ignored. To generalize to Push-$k$ Pull-$k$ we simply expand the number of blocks between entrances and exists. Instead of having $3$ blocks between each entrance and exit, we have $2k+1$ blocks. There is still one vacant square left in the center of one of the rows of blocks to dictate the state of the toggle. The robot can push the row of $k$ blocks to the center or pull $k$ blocks opening up a square in the center, giving us the same function as before.

To construct a 4-toggle we essentially take two copies of the 2-toggle, rotate them perpendicular to each other in 3D, and let them overlap on the central axis, where the block is missing. See Figure~\ref{fig:4Toggle3D}. We still interpret the lack of blocks in the same positions as in the 2-toggle as states $A$ or $B$. Now we have four different paths which function like the ones described above. Similar arguments show the broken states of the 4-toggle don't matter.
For Push1-Pull1, this construction requires thin walls, since the exit pathways from $1b$, $2b$, $3b$ and $4b$ must pass immediately next to each other. For Push$k$-Pull$k$, with $k > 1$, thin walls are not necessary, since the exit pathways are separated from each other.
\subsection{Locks}

A 2-toggle and lock is a gadget consisting of a 2-toggle and a separate pathway. Traversing the separate pathway is only possible if the 2-toggle is in a specific state, and the traversal does not change the internal state of the 2-toggle. The 2-toggle functions exactly as described above.

This gadget can be implemented using a 4-toggle by
connecting the $3b$ and $4b$ entrances of the 4-toggle with an additional corridor, as shown in Figure~\ref{fig:LockA}.
Traversing the resultant full pathway, from $3a$ to $3b$ to $4b$ to $4a$, is possible only if the initial
state of the 4-toggle is $A$, and will leave the 4-toggle in state $A$. In addition, a partial traversal,
such as from $3a$ to $3b$ and back to $3a$, does not change the internal state. The two unaffected
pathways of the toggle, $1$ and $2$, continue to function as a 2-toggle.

\begin{wrapfigure}{tr}{0.45\textwidth}
\vspace{-5mm}
  \centering
    \includegraphics[width=.4\textwidth]{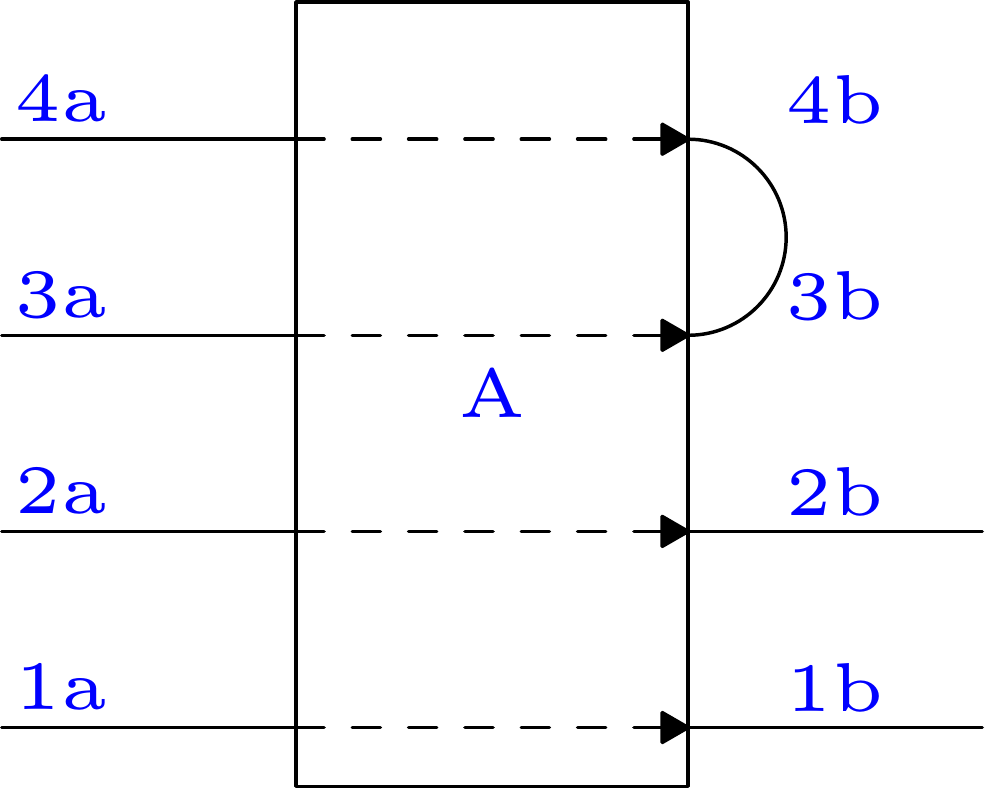}
    \caption{Diagram of a lock. The $3a$ to $4a$ traversal is only possible in state $A$ and returns the toggle to state $A$.}
    \label{fig:LockA}
\vspace{-7mm}
\end{wrapfigure}

A 2-toggle and lock can be extended to a 2-toggle with many locks. The 2-toggle with many locks is a gadget consisting of a 2-toggle and any number of separate pathways which can only be traversed when the gadget is in state $B$. This can be constructed using one 2-toggle and lock per separate pathway needed and attaching the toggles in series.
We orient the 2-toggles so that their 2-toggles are all passable at once in one direction.
When the 2-toggle is traversed, all of the internal locks' states flip, rendering the gadget passable in the opposite direction, and switching the passability and impassibility of all of the external pathways.

The 2-toggle still functions as described above.
Each pathway may be set up to be passable if and only if the toggle is in state $A$, or passable if and only if the toggle is in state $B$. 
This is implemented using one 2-toggle and lock per separate pathway needed and attaching the toggles in series.
By choosing 2-toggles and lock gadgets that are initially passable or initially unpassable, and orienting them all so that 
their 2-toggles are all passable at once, we can create a 2-toggle and many locks with an arbitrary arrangement of locks which correspond to the state of the 2-toggle.

This is implemented using one 2-toggle and lock per separate pathway needed. As shown in Figure~\ref{fig:LockBlock}, there are 2 pathways which run through the entire 2-toggle and many locks gadget: the 1 pathway, and the 2 pathway. 
By choosing 2-toggles and lock gadgets that are initially passable or initially unpassable, and orienting them all so that 
their 2-toggles are all passable at once, we can create a 2-toggle and many locks with an arbitrary arrangement of locks with 
either correspondence to the state of the 2-toggle.
\begin{figure}[b]
\centering
    \includegraphics[width=0.9\textwidth]{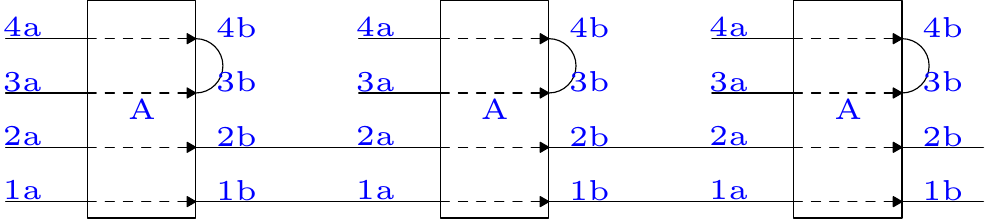}
    \caption{A 2-toggle and many locks.}
    \label{fig:LockBlock}
\end{figure}

When the 2-toggle is traversed, all of the internal locks' states flip, rendering the gadget passable in the opposite direction, and switching the passability and impassibility of all of the external pathways.

\subsection{Quantifiers}

We will construct a series of gadgets that will allow us to simulate existential and universal quantifiers in a boolean formula.

\subsubsection{Existential Quantifier}
An existential gadget is like a 2-toggle and many locks, except that instead of a
2-toggle, it has a single pathway which is always passable in both directions. Upon traversing the pathway
the robot may or may not change the internal state of the 2-toggle and many locks, as it chooses. The variable is 
considered true if the 2-toggles and many locks is in state $A$ and false if it is in state $B$. This gadget
is shown in Figure~\ref{fig:Existential}.

\begin{figure}[h!]
\centering
    \includegraphics[width=0.8\textwidth]{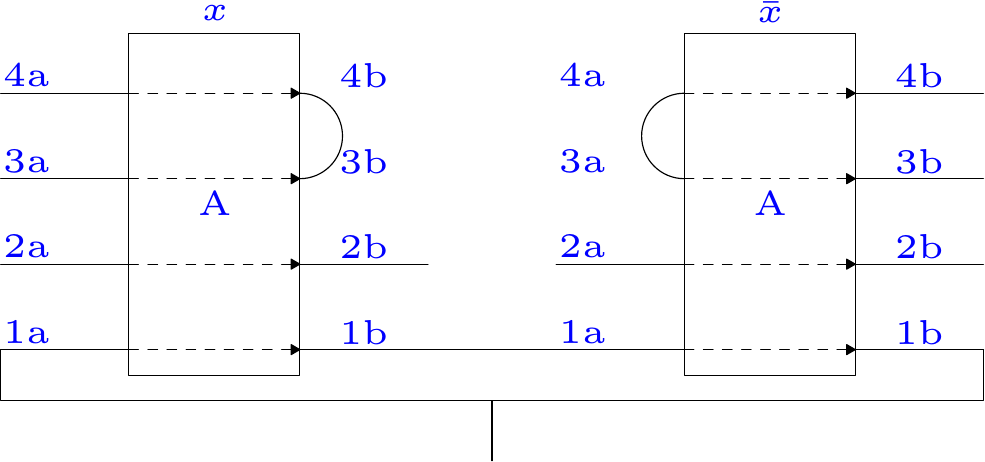}
    \caption{An existential gadget.} 
    \label{fig:Existential}
\end{figure}

\subsubsection{Binary Counter}
\label{sec:BinaryCounter}
Universal quantifiers must iterate through all possible combinations of values that they can take. In this section we construct a gadget that runs though all the states of its subcomponents as the robot progresses through the gadget. This construction will serve as the base for our universal quantifiers.

\begin{figure}[h!]
\centering
    \includegraphics[width=.9\textwidth]{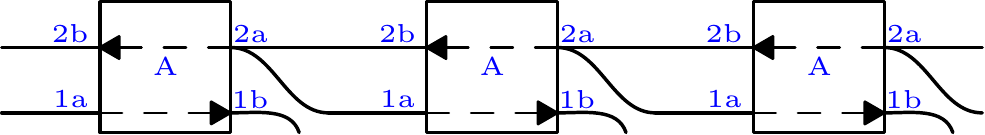}
    \caption{The central portion of a three bit binary counter made from 2-toggles.}
    \label{fig:BinaryCounter}
\end{figure}
  
A binary counter has a fixed number of internal bits.
Whenever the binary counter is traversed in the forwards direction, the binary number
formed by the internal bits increases by one and the robot leaves via one of the exits.
If the binary counter is traversed in the reverse direction, the internal value is reduced by
one. If the binary counter is partially traversed, but then the robot leaves via its initial entrance,
the internal value does not change.


The binary counter is implemented as a series of 2-toggles, as shown in Figure~\ref{fig:BinaryCounter}.
To see that this produces the desired effect, identify a toggle in state $A$ as a $0$ bit, and a toggle in state
$B$ as a $1$ bit. Let the entrance toggle's bit be the least significant bit, and the final toggle be the
most significant. When the robot enters the binary counter in the forwards direction, it will flip
the state of every toggle it passes through. When it enters a toggle that is initially in state $B$, and thus whose
bit is $1$, it will flip the state/bit and proceed to the next toggle, via the $2B - 2A$ pathway. When it
encounters a toggle that is initially in state $A$ / bit $0$, it will flip the state/bit and exit via the $1A - 1B$
pathway. Thus, the overall effect on the bits of the binary counter is to change a sequence of bits ending at the
least significant bit from $01\ldots11$ to $10\ldots00$, where the entrance is at the right.
This has the effect of increasing the value of the binary counter by one.
We will not examine the reverse transitions or rigorously complete the binary counter here, 
as we do not use it directly in the final construction. 

\subsubsection{Alternating Quantifier Chain}
\begin{figure}[h!]
\centering
    \includegraphics[width=0.9\textwidth]{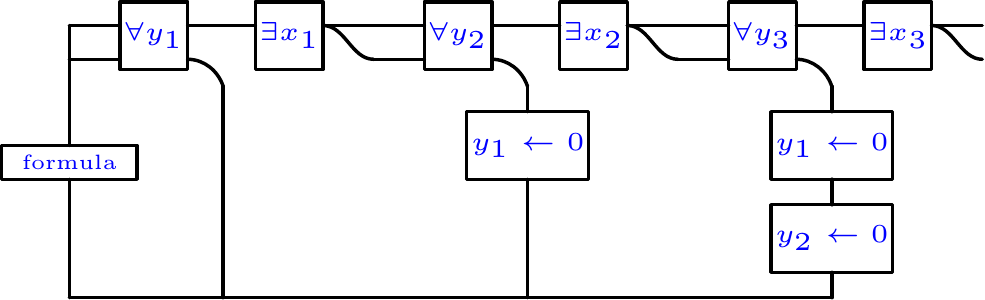}
    \caption{A segment of the alternating quantifier chain. Each square represents the 2-toggle part of a 2-toggle and many locks.}
    \label{fig:QuantifierChain}
\end{figure}

An alternating quantifier chain, shown in Figure~\ref{fig:QuantifierChain}, implements a series of
alternating existential and universal variables, as well as external literal pathways, which may be traversed
if and only if their corresponding variables are set to a prespecified value.

Traversing the quantifier chain
repeatedly in the primary direction will cycle the universal variables through all $2^n$ possible settings.
Upon each traversal, an initial sequence of the universal variables will have their values flipped.
During the traversal, the robot will have the option to set a series of corresponding existential variables to whatever value it wishes. These comprise the existentials nested within the universal variables whose values were flipped. An analysis of the Quantifier Chain can be found in Appendix~\ref{sec:AnalysisQuantifierChain}

Traversing  the quantifier chain in the reverse direction is only possible if the robot enters
via the lowest order universal toggle whose setting is $1$. The traversal will go back one setting in the
sequence of possible settings of the universal variables, and allow the robot to set all existential variables
corresponding to altered universal variables arbitrarily. No other existential variables can be changed.

There is also a special exit, the overflow exit, which can only be
reached after all of the universal variable settings have been traversed. This is the goal location for the robot.

A quantifier chain is implemented much like a binary counter (Section~\ref{sec:BinaryCounter}), with some additions. Every universal variable will be represented by a 2-toggle and many locks, where individual locks will serve as a literal. The 2-toggles are hooked up in the same manner as the 2-toggles in a binary counter gadget. This forces the 2-toggle and many locks gadgets to be set to the corresponding values in the simulated binary counter.

The next addition is the existential variables, which consist of existential gadgets placed just after the $2A$
exits of each universal variable,
and just before the $1A$ and $2B$ entrances of the next universal variable, as shown
in Figure~\ref{fig:QuantifierChain}.

One portion of the apparatus which has not been analyzed thus far is the potential for the robot to re-enter the chain of existentials
via a different exit pathway than the one just exited. This would be problematic if the robot re-entered via a universal gadget it had not just exited,
both because the robot should not be able to take any action other than reversing its prior progress, 
decrementing the binary counter/universal quantifiers. Problems would also arise if the robot got access to any existential quantifiers 
it did not just traverse.

After a traversal, the universal quantifiers have the settings $\ldots??10\ldots00$, where the lowest significance $1$ is 
on the pathway just exited.
To prevent the robot from re-entering via any pathway other than the one just exited, we add a series of locks to each exit that are only passable
if all lower-significance universal toggles are in state $0$, as shown in Figure~\ref{fig:QuantifierChain}.
This does not impede the exit that the robot uses initially, since all
lower-significance universal toggles are indeed $0$. These locks do prevent re-entry into any higher-significance universal toggles, since the
lock corresponding to the lowest-significance $1$ will be closed. The robot cannot re-enter via any toggle that is in state $0$,
due to the arrangement of the toggle pathways. Thus, the unique re-enterable pathway is the lowest-significance toggle in state $1$, as desired.

\subsection{Clause Gadget}
We construct a clause gadget by putting lock pathways of three 2-toggle with many locks in parallel, as we did with Set-Verify gadgets in Figure~\ref{fig:NPClauseGadget}. Each of these paths can be traversed only if the corresponding variable has been set to true, or to false, depending on the orientation of that particular lock. Since they are in parallel, only one needs to be passable for the robot to be able to continue on to the next clause.

\subsection{Beginning and End Conditions}
The overall progression of the robot through the puzzle starts with the quantifier chain.
The robot increments the universal variables and sets the appropriate existential variables arbitrarily, 
then traverses a series of clause gadgets to verify that the TQBF formula represented by those clauses 
is true under that setting of the variables. Then, the robot cycles around to the quantifier chain, and repeats.

At the beginning of this procedure, the robot must be allowed to set all of the existential variables arbitrarily.
To ensure this, we will set up the quantifier gadget in the state $01 \ldots 11$, with all variables set to $1$
except the highest order one.  The highest order variable will be special, and will not be used in the $3CNF$
formula. The initial position of the robot will be at the entrance to the quantifier gadget. This will allow
the robot to flip every universal in the quantifier gadget, from $01 \ldots 11$ to $10 \ldots 00$, and accordingly
set every existential variable arbitrarily. To force the robot to go forward through the quantifier gadget
instead of going backwards through the clause chain, we will add a literal onto the end of the formula gadget
which is passable if and only if the highest order variable is set to $1$.
After this set up, the robot will progress through the loop consisting of the quantifier gadget and the
formula gadget, demonstrating the appropriate existential settings for each assignment of the universal
quantifiers.

At each point in this process, the robot has the option to proceed through this cycle backwards, as is
guaranteed by the reversibility of the game. However, at no point does proceeding in the reverse direction
give the robot the ability to access locations or set toggles to states that it could not have performed
when it initially encountered the toggles or locations. Thus, any progression through the states of the
alternating quantifier chain must demonstrate a TQBF solution to the formula given.

After progressing through every possible state of the universal quantifiers, the universals will be in the
state $11 \ldots 11$. At this point, the robot may progress through the quantifier gadget and exit via its special
pathway,
the carry pathway of the highest order bit. This special pathway will lead to the goal location of the puzzle.
Thus, only by traversing the quantifier - formula loop repeatedly, and demonstrating the solution to the TQBF
problem,
will the robot be able to reach the goal. The robot may reach the goal if and only if the corresponding quantified
boolean formula is true.

\begin{theorem}
    \label{thm:3dPSPACE-complete}
    Push-$k$ Pull-$k$, $k>1$ in 3D with fixed blocks is PSPACE-complete.
\end{theorem}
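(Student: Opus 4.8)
The plan is to establish both membership in PSPACE and PSPACE-hardness. For membership, I would note that a configuration of a Push-$k$ Pull-$k$ puzzle is completely specified by the positions of the movable blocks together with the robot, which is a string of size polynomial in the board description, and that the single-move relation is checkable in polynomial time. Deciding reachability of the goal cell is thus a reachability query in an implicitly represented state graph with polynomial-size vertices; a nondeterministic machine can guess the robot's moves one at a time while storing only the current configuration, placing the problem in $\mathrm{NPSPACE} = \mathrm{PSPACE}$ by Savitch's theorem.

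For hardness I would reduce from TQBF by assembling the gadget hierarchy built up above. Because $k > 1$, the 4-toggle is realizable with fixed blocks alone: scaling each toggle row to $2k+1$ blocks separates the four exit pathways, so the thin walls required in the $k=1$ case are unnecessary---this is exactly what specializes the argument to $k>1$ with fixed blocks. From the 4-toggle I obtain a 2-toggle and lock, and chaining these in series yields a 2-toggle with many locks; adding the always-passable bypass gives the existential gadget, and wiring the toggles in binary-counter fashion with nested existentials and exit locks yields the alternating quantifier chain. Given a formula $\forall y_1\,\exists x_1\,\cdots\,\theta$, I instantiate one universal stage per $y_i$ and one existential gadget per $x_i$ in the chain, realize each clause of $\theta$ as three lock pathways in parallel (the clause gadget), and connect the quantifier chain to the resulting formula gadget in a single loop, fixing the start cell, the initial $01\cdots11$ universal setting, and the overflow cell as the goal, per the beginning-and-end construction.

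Correctness then has a forward and a reverse direction. Forward: traversing the quantifier chain cycles the universal variables through all $2^n$ settings in counter order, at each step letting the robot set the freshly unlocked existentials arbitrarily and then forcing a pass through the formula gadget, which is traversable precisely when the current assignment satisfies every clause; reaching the overflow goal requires doing this for every universal setting, so a solution exists if and only if the QBF is true. The reverse direction---that the robot cannot cheat---is the main obstacle, and it is dictated entirely by the game's reversibility: since every move can be undone, I must confirm that no backward play lets the robot decrement past its current progress, re-enter a universal stage it did not just exit, or retouch existentials it did not just set. This is enforced by the exit locks on each chain stage, which leave the lowest-significance stage in state $1$ as the unique re-enterable one; the detailed verification is the quantifier-chain analysis deferred to the appendix. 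I would close by checking that, for each fixed $k>1$, the construction has size polynomial in the formula, completing the reduction.
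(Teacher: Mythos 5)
Your proposal is correct and follows essentially the same route as the paper: membership via the polynomial-size configuration and Savitch's theorem, and hardness via the reduction from TQBF through the 4-toggle, lock, quantifier-chain, and clause gadgets, with the observation that $k>1$ separates the exit pathways and so eliminates the need for thin walls. The paper's stated proof is merely a two-sentence wrapper around that same construction, so your write-up is a faithful (and somewhat more explicit) account of the intended argument.
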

\begin{proof}
By the above construction, TQBF can be reduced to Push-$k$ Pull-$k$ in three dimensions with fixed walls, through the intermediate step of construction a 4-toggle. This implies that Push$k$-Pull$k$ is PSPACE-hard.
Since Push$k$-Pull$k$ has a polynomial-size state, the problem is in NPSPACE, and therefore in PSPACE by Savitch's Theorem\cite{SAVITCH1970177}. So it is PSPACE-complete.
\end{proof}

\begin{theorem}
    Push-1 Pull-1 in 3D with thin walls is PSPACE-complete.
\end{theorem}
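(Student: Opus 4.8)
The plan is to mirror the proof of Theorem~\ref{thm:3dPSPACE-complete} almost verbatim, since the entire PSPACE-hardness reduction from TQBF is built on top of the abstract 4-toggle gadget and the gadgets derived from it: the lock, the 2-toggle with many locks, the existential gadget, the alternating quantifier chain, and the clause gadget. None of these higher-level constructions depend on the push/pull strength; they depend only on having a functioning 4-toggle that realizes the transition table of Table~\ref{4ToggleStateTransition}. Therefore I would reuse the reduction unchanged: given a TQBF instance $\forall y_1 \exists x_1 \cdots \theta$, build the same alternating quantifier chain feeding into the same series of parallel clause gadgets, with the same initial configuration $01\ldots 11$, the same start position, and the same overflow/goal exit, so that the robot can reach the goal if and only if the formula is true. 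Reversibility is handled exactly as before: at no point does running the quantifier--formula loop backwards grant access to states the robot could not already have reached.

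The one place the argument must change is the physical realization of the 4-toggle in a Push-1 Pull-1 puzzle. First I would recall the 4-toggle construction from the Toggles subsection: two perpendicular 2-toggles sharing their central missing-block axis in 3D, with the location of the vacancy encoding the state $A$ or $B$. As already noted there, when $k=1$ the four exit pathways leaving $1b$, $2b$, $3b$, and $4b$ must run immediately adjacent to one another, and with only single-block pushing and pulling there is no room to separate them using extra spacer blocks. I would therefore place thin walls between these adjacent corridors to prevent the robot from leaking from one pathway into another---precisely the separating role played by the additional block spacing in the $k>1$ construction of Theorem~\ref{thm:3dPSPACE-complete}. After checking that this thin-walled gadget obeys the abstract 4-toggle transition semantics and that its broken states remain no more useful than the legal states $A$ and $B$, the gadget is functionally interchangeable with the one used for $k>1$.

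With a working 4-toggle in hand, the remainder of the reduction is identical to Theorem~\ref{thm:3dPSPACE-complete}, establishing PSPACE-hardness. For membership, I would again observe that a Push-1 Pull-1 configuration---the robot's position together with the positions of all blocks---is described in polynomial space, so the problem lies in NPSPACE and hence in PSPACE by Savitch's Theorem. Combining the two directions yields PSPACE-completeness.

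The main obstacle I anticipate is entirely at the gadget level: verifying that a single-block push/pull 4-toggle whose four exit corridors are separated only by thin walls genuinely enforces the toggle behavior---that the state bit is correctly read and flipped on each traversal, that the thin walls block every unintended crossing between the four corridors, and that no broken configuration ever affords the robot more than a legal state would. Everything above the gadget is inherited wholesale from the earlier proof, so the entire weight of this theorem rests on confirming the correctness of the thin-walled $k=1$ 4-toggle.
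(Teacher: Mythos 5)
Your proposal follows the paper's argument exactly: the paper likewise observes that Push-1 Pull-1 in 3D with thin walls can realize the 4-toggle (with thin walls separating the adjacent exit corridors that the $k>1$ construction separates by spacing), and then inherits the entire TQBF reduction and the Savitch's-theorem membership argument from Theorem~\ref{thm:3dPSPACE-complete}. Your write-up is in fact more explicit than the paper's one-sentence proof about where the gadget-level verification burden lies, but the approach is the same.
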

\begin{proof}
    Push-1 Pull-1 in 3D with thin walls can construct a 4-toggle, and so by the same argument as in Theorem~\ref{thm:3dPSPACE-complete}, it is PSPACE-complete.
\end{proof}

\section{Conclusion}
In this paper, we proved hardness results about variations of block-pushing puzzles in which the robot can also pull blocks. Along the way, we analyzed the complexity of two new, simple gadgets, creating useful new toolsets with which to attack hardness of future puzzles. The results themselves are obviously of interest to game and puzzle enthusiasts, but we also hope the analysis leads to a better understanding of motion-planning problems more generally and that the techniques we developed allow us to better understand the complexity of related problems.

This work leads to many open questions to pursue in future research. For Push-Pull block puzzles, we leave several NP vs.\ PSPACE gaps, a feature shared with many block-pushing puzzles. One would hope to directly improve upon the results here to show tight hardness results for 2D and 3D push-pull block puzzles. One might also wonder if the gadgets used, or the introduction of thin walls, might lead to stronger results for other block-pushing puzzles. We also leave open the question of push-pull block puzzles without fixed blocks or walls. In this setting, even a single $3\times3$ area of clear space allows the robot to reach any point, making gadget creation challenging.

There are also interesting questions with respect to the abstract gadgets introduced in our proof. We are currently studying the complexity of smaller toggles and toggle-lock systems. It would also be interesting to know whether Set-Verify gadgets sufficient for PSPACE-hardness or if they can build full crossover gadgets. Also, there are also many variations within the framework of connected blocks with traversibility which changes with passage through the gadget. Are any other gadgets within this framework useful for capturing salient features of motion planning problems? Finally, there is the question of whether other computational complexity problems can make use of these gadgets to prove new results.


\bibliographystyle{alpha}
\bibliography{PushPullBib}{}
%

\end{document}